\documentclass[aps,prx,10pt,superscriptaddress,twocolumn,nofootinbib,floatfix]{revtex4-2}
\usepackage{graphicx}
\usepackage{amsmath,amsthm,amssymb,dsfont,soul,enumitem}
\usepackage{mdframed}
\usepackage{orcidlink}
\usepackage{array}
\usepackage{tikz}
\usetikzlibrary{arrows.meta,positioning,calc,decorations.pathreplacing}
\newcommand{\ket}[1]{\left| #1 \right\rangle}
\newcommand{\bra}[1]{\left\langle #1 \right|}
\newcommand{\proj}[1]{\ket{#1}\hskip-1mm\bra{#1}}

\newcommand{\braket}[2]{\langle #1|#2 \rangle}
\newcommand{\ketbra}[2]{\left|#1\right\rangle\hskip-1mm\left\langle#2\right|}

\usepackage{hyperref}
\usepackage[capitalise]{cleveref}
\crefname{equation}{Eq.}{Eqs.}
\crefname{figure}{Fig.}{Figs.}
\soulregister{\cite}{7}
\soulregister{\cref}{7}
\soulregister{\Cref}{7}
\soulregister{\ref}{7}
\newtheorem{theorem}{Theorem}
\newtheorem{corollary}{Corollary}
\newtheorem{lemma}{Lemma}
\newtheorem{proposition}{Proposition}
\newcommand{\Fs}{F_{\mathrm{s}}}
\newcommand{\Nneg}{\mathcal{N}}
\newcommand{\KD}{\varrho_{\mathbb C}}
\newcommand{\wv}[1]{\langle\hat\Pi(a)\rangle_{w,#1}}
\newcommand{\panel}[2]{\begin{tikzpicture}\node[anchor=south west,inner sep=0] (img) {\includegraphics[width=0.44\textwidth]{#2}};\node[anchor=north east,font=\small] at ([xshift=-2pt,yshift=-2pt]img.north west) {(#1)};\end{tikzpicture}}

\makeatletter \def\switch@array{}\makeatother

\begin{document}

\title{Negative quasiprobabilities redistribute (and complex ones reduce) the information in unabsorbed photons}

\author{Jonte R. Hance\,\orcidlink{0000-0001-8587-7618}}
\email{jonte.hance@newcastle.ac.uk}
\affiliation{Quantum Group, School of Computing, Newcastle University, 1 Science Square, Newcastle upon Tyne, NE4 5TG, UK}

\begin{abstract}
Negative Kirkwood--Dirac quasiprobabilities are credited with the quantum advantage of interaction-free measurement and postselected metrology. We test that claim on the task interaction-free measurement was designed for: learning about a sample while it absorbs as little light as possible. The quasiprobability distribution of path and output gives the linear response to a weak sample: its real part responds to absorption, its imaginary part to a phase. For a pure single photon passing once through any interferometer, the Fisher information about transmittance per absorbed photon is the direct-transmission value minus a sum of squared imaginary parts: no interferometer or readout beats direct transmission. Negativity only concentrates information in weakly lit outputs. These imaginary parts depend on the output basis, not the sample, so an optimal basis with all terms real always exists. A single photon beats direct transmission only through a phase response, which for an optical line needs a thick or inhomogeneously broadened sample, or through many passes when absorptions go uncounted, as a critically coupled resonator shows. For detecting a weak absorber, negativity gives a single-shot advantage that does not accumulate over repeated trials. The advantage that accumulates comes from outputs dark without the object, and needs no negativity.
\end{abstract}

\maketitle

\section{Introduction}

Measuring a sample that the probe damages is a problem of rationing: we want to maximise the information gained per particle absorbed, not per particle sent. This limits the imaging of biological matter with electrons and X-rays \cite{Henderson1995}, and is the problem interaction-free measurement promises to resolve, since an interferometer can detect an object with photons that never reach it \cite{Elitzur1993,Kwiat1995,Kwiat1999}. Interaction-free measurement has been extended to imaging \cite{White1998,Zhang2019,Hance2021} and proposed for electron microscopy \cite{PutnamYanik2009,Kruit2016}, where it has been demonstrated with electrons \cite{Turner2021}. Quantum foundations has meanwhile proposed a candidate for what provides interaction-free measurement's quantum ingredient: negative values of the Kirkwood--Dirac (KD) quasiprobability \cite{Kirkwood1933,Dirac1945,ArvidssonShukur2024}. Negativity of the doubly extended distribution (the KD distribution with the generator's eigenbasis entered twice) is necessary for postselected metrology to exceed the classical bound per retained particle \cite{ArvidssonShukur2020}, the resulting advantage has been demonstrated experimentally \cite{LupuGladstein2022}, and the enhancement has been traced to an anomalous conditional uncertainty of the generator \cite{Ji2026}. KD negativity also allows lossless distillation of the Fisher information about several parameters \cite{Jenne2022}, and sufficiently negative values violate noncontextual inequalities in a three-path interferometer \cite{Hofmann2023,Hofmann2025}, where contextuality requires counterfactual gain \cite{Sagawa2026}. For interaction-free measurement, a negative KD term is sufficient for counterfactual gain, the quantum part of the change an absorber makes to the output statistics \cite{Hance2024}. That paper computed it for an opaque object and a single photon's guess. Whether that negativity helps the task interaction-free measurement is meant for, gentle (high information-per-absorbed-photon) measurement of a partially transparent object, has not been settled.

Answering this needs the limits on the information per absorbed photon, and these depend on what is recorded. If the source has a known mean and an absorption is not itself detected, only the output counts hold information, and we call the absorptions \emph{unregistered}. If the photon number is known, an absorption is itself an outcome, and we call them \emph{registered} (the literature also says that the absorptions are counted, or the photon heralded). The bounds on information per absorbed photon differ between the two cases by the factor $1/(1-t)$, $t$ being the object's intensity transmittance, and which case applies decides whether a multi-pass scheme can help. In both, our baseline for comparison is direct transmission: sending the light straight through the object and counting what arrives.

Without an interferometer, the limits are known: a coherent state provides information $N/t$ \cite{Monras2007} and an $N$-photon number state $N/[t(1-t)]$ \cite{Adesso2009}, which no ancilla-assisted parallel probe exceeds (sequential strategies being left open \cite{Nair2018}), and which a heralded single photon attains in principle and approaches in experiment \cite{Whittaker2017}. For interaction-free schemes, Mitchison and Massar showed that two partially transparent objects cannot be distinguished absorption-free with certainty~\cite{Mitchison2001}, and with Pironio they bounded the number of photons any protocol must sacrifice to resolve a difference in transparency, a bound~\cite{Massar2001,Mitchison2002} that counting the absorbed photons matches up to a constant for close transparencies, unless the two transparencies differ only in phase~\cite{Massar2001}. Facchi \textit{et al} found the same Cram\'er--Rao bound per absorbed particle for standard and Zeno tomography with binomial statistics, and an advantage for Zeno tomography in telling apart grey levels that are unequally represented in the sample~\cite{Facchi2002}. Wildfeuer's recent bound covers any fixed arrangement of one photon and a memoryless, non-dispersive object, however many passes are used: with the absorptions registered, no arrangement beats direct transmission per absorbed photon, and the Elitzur--Vaidman interferometer only attains this bound, so its advantage for detecting an object does not extend to estimation~\cite{Wildfeuer2026}.

Other work reports gains under different normalisations. Thomas \textit{et al} found by simulation that a Zeno interferometer measuring a transparency loses no fewer probe particles than a transmission measurement when the losses are counted, but fewer with a Poissonian source, unregistered losses and a transparency above about one half, and fewer, counted or not, for telling apart two transparencies of high contrast~\cite{Thomas2014}. Twin-beam probes gain information per photon incident on the sample, and so, in a single pass, per photon absorbed \cite{Brida2010,Moreau2017,Losero2018}, and a coherent state in a ring resonator gains information per photon entering the device \cite{Belsley2022} (\cref{sec:discussion} compares both with the bounds we give later).

These works identify readouts that reach their own limits, intensity detection of twin beams \cite{Losero2018} and of a critically coupled ring's output \cite{Belsley2022}, but not which readouts of an interferometer's unabsorbed photons reach the limits per absorbed photon, which is where the KD terms would enter. The conditions under which a readout of a pure-state family attains the quantum Fisher information are standard \cite{Braunstein1994,BarndorffNielsen2000}: for a phase, a readout that falls short loses four times the variance of the real weak values of the generator \cite{Hofmann2011}. None of the work above evaluates that loss for an absorber or connects it to the quantities the interaction-free literature works with. Wildfeuer's proof evaluates the quantum Fisher information of the purified state, which bounds the photon's information by monotonicity under the partial trace, and so covers every readout at once. It does not, however, tell us which readouts of the unabsorbed photons reach the bound in general, what a readout that falls short loses, how either relates to the KD terms the interaction-free literature works with, or (beyond the example of classical light passed many times through the object) what changes when the absorptions go unregistered. Those are the questions we address here.

We settle whether negativity raises the information per absorbed photon, for single photons and for coherent light, for a partially transmitting object and with every figure of merit normalised per absorbed photon. The KD distribution of path and output turns out to be the outputs' linear response to a weak sample (\cref{sec:setting}, \cref{fig:kdplane}): its real part is each output's response to absorption in the path, its imaginary part is the response to a phase. For one photon in a pure state passing once through any interferometer, \cref{thm:main} and \cref{cor:absorbed} then give the Fisher information about an absorber's transmittance per absorbed photon as the direct-transmission value minus a sum of squared imaginary parts. No single-pass interferometer or readout therefore beats direct transmission per absorbed photon, in either case: the registered bound is Wildfeuer's, and the unregistered bound follows from this bound for a single pass, since the absorbed outcome adds $P(a)/(1-t)$ to the Fisher information whatever the readout ($P(a)$ being the probability that, if measured, the photon would be on the object's path). Negativity plays no role in the total: it only moves information between outputs, concentrating it in weakly lit ones, which is weak-value amplification of an absorber (\cref{cor:detected}). The imaginary parts are set by the readout rather than the object, so for a pure input a readout with real KD terms always exists, and we show every such readout is optimal; the bound also holds for mixed inputs and for several absorbers at once, though a mixed input no longer reaches this bound with every real readout (\cref{sec:extensions}). The identity also shows the two routes by which single-photon interferometry can gain information per absorbed photon (\cref{sec:helps}). A phase response turns the KD's imaginary parts from loss into signal, which gains information over direct transmission for an optically thick or inhomogeneously broadened sample (\cref{thm:dispersive}, \cref{sec:dispersive}). Interference between the responses of different passes gains information when the absorptions go unregistered, and a critically coupled ring resonator on resonance uses this to come arbitrarily close to the registered bound for a long wavepacket (\cref{prop:manypass}, \cref{sec:multipass}). For deciding whether a weak absorber is there at all, negativity gives a single-shot advantage per absorbed photon that does not accumulate over repeated or sequential trials. The advantage that accumulates over trials comes from outputs that are dark, or nearly dark, when the object is absent, and the optimal single-pass test has no negative KD term. Only for a strongly absorbing object does negativity raise the exponent of a repeated test (\cref{sec:discrimination}).

\Cref{sec:setting} sets out the linear response and the weak-value form of the statistics. \Cref{sec:fisher} proves the Fisher information identity, and extends it to mixed inputs and several objects. \Cref{sec:helps} discusses the two routes which provide a gain: a dispersive object, and many passes. \Cref{sec:discrimination} turns to deciding whether the object is present, \cref{sec:test} proposes a test of \cref{cor:detected} on an existing silicon three-path interferometer and of \cref{cor:absorbed} on a higher-visibility version of it, and \cref{sec:discussion} returns to the question of whether we can call negativity or imaginarity a resource for imaging, and what the results mean for low-dose imaging. \Cref{tab:map} summarises the bounds we and others have given.

\section{Setting, linear response, and the weak-value form of the statistics}
\label{sec:setting}

\begin{figure}[t]
\centering
\begin{tikzpicture}[scale=0.9, every node/.style={font=\small}]
  % interferometer body
  \draw[thick, fill=black!5] (2,0) rectangle (5,3);
  \node at (3.5,1.5) {\parbox{2.4cm}{\centering lossless optics\\(any $d$-path\\interferometer)}};
  % input
  \draw[thick,-{Latex}] (0.3,1.5) -- (2,1.5);
  \node[above] at (1.1,1.55) {$\ket{\psi}$};
  % paths inside (schematic): the absorber path a
  \draw[thick] (2.5,0) -- (2.5,-0.9) -- (4.5,-0.9) -- (4.5,0);
  \draw[thick, fill=white] (3.2,-1.2) rectangle (3.8,-0.6);
  \node[below] at (3.5,-1.25) {path $\ket{a}$: transmittance $t$};
  % outputs
  \foreach \y/\lab in {2.6/{$\ket{m_1}$}, 1.9/{$\ket{m_2}$}, 0.5/{$\ket{m_d}$}} {
    \draw[thick,-{Latex}] (5,\y) -- (6.6,\y);
    \node[right] at (6.6,\y) {\lab};
  }
  \node at (5.8,1.25) {$\vdots$};
  % classification of outputs
  \node[draw, rounded corners, inner sep=3pt, font=\footnotesize, align=left] at (3.7,-2.5) {for $t<1$, $m$ brightens iff $\varrho(a,m)<\tfrac12(1-\sqrt t)|\braket{m}{a}|^2P(a)$\\ as $t\to1$: $\varrho(a,m)>0$: darkens \quad $\varrho(a,m)<0$: favours presence\\ $P(m)=0$, $|\braket{m}{a}|^2>0$: null port, lit for $t<1$: proves presence};
\end{tikzpicture}
\caption{A single photon in state $\ket{\psi}$ enters an arbitrary lossless $d$-path interferometer. One internal path, $\ket{a}$, contains an object of intensity transmittance $t$; the photon is detected in an output basis $\{\ket{m}\}$ that includes all optics after the object. Each output is classified by its Kirkwood--Dirac term $\varrho(a,m)=\mathrm{Re}[\braket{m}{a}\bra{a}\hat\rho\ket{m}]$ against its Elitzur--Vaidman term $|\braket{m}{a}|^2P(a)$. For $t<1$ an output brightens when the object is inserted if and only if $\varrho(a,m)<\tfrac12(1-\sqrt t)|\braket{m}{a}|^2P(a)$, which is \cref{eq:brightcond}, with $P(a)=|\braket{a}{\psi}|^2$ the probability of path $a$ with the object absent; no model that only removes photons from path $a$ allows any output to brighten. The two signs shown are that condition as $t\to1$: away from that limit an output with a small positive $\varrho(a,m)$ brightens too, and a null port, dark in the object's absence but reachable from $\ket{a}$, brightens at every $t<1$, so a click there proves the object present, though not, by itself, contextuality (\cref{sec:discussion}).}
\label{fig:setup}
\end{figure}

A single photon is prepared in a state $\ket{\psi}$ on $d$ paths and passes once through lossless linear optics before being detected in an orthonormal output basis $\{\ket{m}\}$; any optics after the object are included in the choice of that basis (\cref{fig:setup}). Single-pass means throughout the interferometer that the photon meets the object at most once, so that the object enters the evolution only through the operator $K_t$ below. One internal path $\ket{a}$ contains a non-dispersive object of intensity transmittance $t\in[0,1]$, so the amplitude on that path is multiplied by $\sqrt{t}$ (a constant phase on that path is equivalent to a unitary acting after the object, so belongs to the optics included in the choice of basis). The unabsorbed part of the state is $K_t\ket{\psi}$ with
\begin{equation}
\begin{split}
K_t &= \mathds{1}-(1-\sqrt{t})\proj{a},\\
S(t)&=\bra{\psi}K_t^\dagger K_t\ket{\psi}=1-(1-t)P(a),
\end{split}
\end{equation}
where $P(a)=|\braket{a}{\psi}|^2$, and $S(t)$ is the survival probability (i.e., the object absorbs $(1-t)P(a)$ photons per trial). The output amplitudes and probabilities are
\begin{equation}
z_m(t) = \braket{m}{\psi}-(1-\sqrt{t})\braket{m}{a}\braket{a}{\psi}, \label{eq:amp}
\end{equation}
\begin{equation}
\begin{split}
P(m|t) = |z_m(t)|^2 = &P(m)-2(1-\sqrt{t})\varrho(a,m)\\ &+(1-\sqrt{t})^2|\braket{m}{a}|^2P(a), \label{eq:Pmt}
\end{split}
\end{equation}
where $P(m)=|\braket{m}{\psi}|^2$ is the probability with the object absent and, in the notation of Ref.~\cite{Hance2024} extended here to the complex term,
\begin{equation}
\begin{split}
&\varrho(a,m)=\mathrm{Re}[\KD(a,m)],\\
&\KD(a,m) = \braket{m}{a}\hspace{-3pt}\bra{a}\hat\rho\ket{m},
\label{eq:kd}
\end{split}
\end{equation}
with $\hat\rho=\proj\psi$, where $\varrho(a,m)$ is the real part of the KD quasiprobability $\KD(a,m)$ for path $a$ and output $m$ \cite{Kirkwood1933,Dirac1945}. $\varrho(a,m)$ is the closest thing to an answer that quantum mechanics gives to the question ``how much of output $m$ came through path $a$'', and can be negative. At $t=0$, \cref{eq:Pmt} is Ref.~\cite{Hance2024}'s counterfactual output distribution $P(m|X_a)$, Eq.~(5) there, $X_a$ denoting a blocked path $a$, and the last term of \cref{eq:Pmt} is that paper's Elitzur--Vaidman term, scaled by $(1-\sqrt{t})^2$. Two sum rules follow from completeness of the output basis and (for the second) purity of the input:
\begin{equation}
\begin{split}
\sum_m \KD(a,m)&=P(a),\\
\sum_{m:\,P(m)>0}\frac{|\KD(a,m)|^2}{P(m)} &= P(a)\bra{a}\hat\Pi_+\ket{a},
\label{eq:sumrules}
\end{split}
\end{equation}
where $\hat\Pi_+=\sum_{m:\,P(m)>0}\proj{m}$ projects onto the outputs which have non-zero probability of being occupied when the object is absent. The first sum rule says the real parts $\varrho(a,m)$ sum to $P(a)$ and the imaginary parts to zero; the second follows from $|\KD(a,m)|^2=P(m)|\braket{m}{a}|^2P(a)$, and its right-hand side is $P(a)$ when no unoccupied output is reachable from $\ket{a}$, in which case the rule is Hosoya and Shikano's weak-value variance identity \cite{Hosoya2010} for the projector $\hat\Pi(a)$. Outputs with $P(m)=0$ but $\braket{m}{a}\neq0$ are the dark or null ports, which receive photons only when the object is present: each takes $|\braket{m}{a}|^2P(a)$ out of the second sum rule. Dividing the KD term by $P(m)$ gives the weak value of the path projector $\hat\Pi(a)=\proj{a}$ between preselection $\ket\psi$ and postselection $\ket m$ \cite{Aharonov1988}, in Ref.~\cite{Hance2023}'s notation, with the postselected output $m$ added as a second subscript:
\begin{equation}
\wv{m} = \frac{\bra{m}\hat\Pi(a)\ket{\psi}}{\braket{m}{\psi}} = \frac{\KD(a,m)}{P(m)} .
\label{eq:wv}
\end{equation}
Classically this would be a conditional probability, so $0\le\mathrm{Re}[\wv{m}]\le 1$. A real part below $0$ or above $1$ is therefore an anomalous weak value, which requires both $\ket\psi$ and $\ket m$ to be coherent between $\ket{a}$ and its complement \cite{Wagner2023} (and is a proof of contextuality \cite{Pusey2014}, tied in a three-path interferometer to the violation of a noncontextual inequality \cite{Hofmann2023}). Dividing \cref{eq:amp} by $\braket{m}{\psi}$ shows that the whole effect of the object on a bright output is set by this weak value:
\begin{equation}
\frac{P(m|t)}{P(m)}=\big|1-(1-\sqrt{t})\,\wv{m}\big|^2 .
\label{eq:LR}
\end{equation}
The left-hand side is the likelihood ratio of a click in output $m$ between the object present and absent, so every statistical quantity in this paper is a functional of the distribution $\{P(m),\wv{m}\}$. This is constrained by $\sum_{m:\,P(m)>0}P(m)\wv{m}=P(a)$ and, from the second sum rule, by $\sum_{m:\,P(m)>0}P(m)|\wv{m}|^2=P(a)\bra{a}\hat\Pi_+\ket{a}\le P(a)$. At $t=0$ the right-hand side of \cref{eq:LR} is $|1-\wv{m}|^2$, the squared modulus of the weak value $1-\wv{m}$ of the projector on the unblocked paths.\footnote{This is the relation, Eq.~(13) of Ref.~\cite{Masiello2026}, by which Masiello \textit{et al} obtain the squared moduli of path weak values from blocked-path intensities in a neutron interferometer, combining them with phase-shifted intensities to fix the real parts; Ref.~\cite{Ijaz2025} noted a link of the same kind between interrogation probabilities and weak values of the non-unitary absorber operator for a polarisation interferometer.}

Differentiating \cref{eq:Pmt} with respect to $t$ gives, for all $t\in(0,1]$,
\begin{equation}
\frac{\partial P(m|t)}{\partial t}=\frac{\varrho(a,m)-(1-\sqrt{t})|\braket{m}{a}|^2P(a)}{\sqrt{t}},
\label{eq:linres}
\end{equation}
so that $\partial_t P(m|t)|_{t=1}=\varrho(a,m)$. The real KD term is therefore the linear response of output $m$ to a weak attenuation of path $a$, and the real weak value $\mathrm{Re}[\wv{m}]=\partial_t\ln P(m|t)|_{t=1}$ is the fractional response. This is the absorption analogue of the statement that the imaginary weak value of a generator is, up to a factor of two, the logarithmic derivative of an output probability with respect to a phase \cite{Hofmann2011,Dressel2012}. To first order in $1-\sqrt{t}$, and for a single postselected output, \cref{eq:LR} is the relation through which Denkmayr \textit{et al} read a path weak value from the intensity change caused by a weak absorber in a matter-wave interferometer \cite[Eq.~(6)]{Denkmayr2014}; we add the complete output basis, with the sum rules \cref{eq:sumrules}, and the imaginary part, whose contribution they neglected. Lostaglio showed that a unitary linear response read this way, as the imaginary part of a KD quasiprobability, is contextual whenever the perturbation is reversible by mixing \cite{Lostaglio2020}. The absorption version also makes the KD distribution operational for a sensor: the sensor measures $\varrho(a,m)$, one output at a time, and negative entries are outputs that brighten when the path is attenuated. Summing \cref{eq:linres} over $m$ necessarily recovers $\partial_t S=P(a)$.

The imaginary part of the same term is the response to a phase. Were the object also to shift the phase of path $a$ by an amount that varies with what is measured, that is, to be dispersive (\cref{sec:dispersive}), it would multiply the amplitude there by $g=\sqrt{t} e^{i\phi}$. \Cref{eq:amp} then would hold with $1-\sqrt{t}$ replaced by $1-g$, so that $P(m|g)=P(m)-2\,\mathrm{Re}[(1-g)\KD(a,m)]+|1-g|^2|\braket{m}{a}|^2P(a)$, and to first order in $1-\sqrt{t}$ and $\phi$, where $1-g\simeq(1-\sqrt{t})-i\phi$,
\begin{equation}
P(m|g)\simeq P(m)-2(1-\sqrt{t})\varrho(a,m)-2\phi\mathrm{Im}[\KD(a,m)].
\label{eq:response}
\end{equation}
The KD term is therefore the whole linear response of output $m$ to a weak object in path $a$ (\cref{fig:kdplane}): its real part is the response to absorption, \cref{eq:linres}, and its imaginary part the response to a phase, Ref.~\cite{Hofmann2011}'s statement for the generator $\hat\Pi(a)$. By the first sum rule, all outputs' responses to absorption add to $P(a)$ (the light the object removes), and their responses to a phase add to zero, since a phase removes no light. No attenuator that only removes photons (rather than also (re)moving coherence between paths) can produce a negative real part; an imaginary part is a response to something an attenuator does not do. For an object that only absorbs, \cref{thm:main} below shows that negative real parts leave the information unchanged and imaginary parts reduce it.

\begin{figure}[t]
\centering
\begin{tikzpicture}[x=9cm,y=9cm,every node/.style={font=\footnotesize}]
  \fill[black!8] (-0.2,-0.26) rectangle (0,0.26);
  \node[align=center,anchor=north west] at (-0.195,0.255) {brightens\\under a\\weak\\absorber};
  \draw[-{Latex}] (-0.2,0) -- (0.6,0);
  \node[anchor=north] at (0.2,-0.265) {$\mathrm{Re}[\KD(a,m)]$: response to absorption $\longrightarrow$};
  \draw[-{Latex}] (0,-0.26) -- (0,0.28) node[right,align=left] {$\mathrm{Im}[\KD(a,m)]$: response to a phase};
  \draw[dashed,black!60] (0.25,0) circle (0.25);
  \foreach \x in {0.25,0.5} \draw (\x,0.008) -- (\x,-0.008);
  \node[below] at (0.25,-0.008) {$\tfrac14$};
  \node[below] at (0.5,-0.008) {$\tfrac12$};
  \fill[purple!80!black] (0.4268,0.1768) circle (0.009) node[above right] {$m=+$};
  \fill[purple!80!black] (0.0732,-0.1768) circle (0.009) node[below right,xshift=2pt] {$m=-$};
  \draw[purple!80!black,densely dotted] (0.4268,0.1768) -- (0.4268,0);
  \draw[purple!80!black,densely dotted] (0.0732,-0.1768) -- (0.0732,0);
  \fill[orange!85!black] ($(0.1111,0)+(-0.009,-0.009)$) rectangle ($(0.1111,0)+(0.009,0.009)$);
  \fill[orange!85!black] ($(-0.1111,0)+(-0.009,-0.009)$) rectangle ($(-0.1111,0)+(0.009,0.009)$);
  \node[above,orange!85!black] at (0.1111,0.012) {$\times2$};
\end{tikzpicture}
\caption{The KD term $\KD(a,m)=\braket{m}{a}\bra{a}\hat\rho\ket{m}$ as the linear response of output $m$ to a weak object in path $a$, \cref{eq:response}. Its real part is the response to absorption, and a negative real part (shaded) marks an output that brightens under a weak absorber; its imaginary part is the response to a phase. Over all outputs the real parts sum to $P(a)$ and the imaginary parts sum to zero, \cref{eq:sumrules}. Circles: the two outputs of \cref{fig:fisher}(a)'s two-path interferometer, with input $(\ket b+e^{i\varphi}\ket a)/\sqrt2$ and balanced readout, at $\varphi=\pi/4$, where $\KD(a,\pm)=(1\pm e^{i\varphi})/4$; as $\varphi$ varies they move round the dashed circle, and their squared imaginary parts, divided by $P(m)$, are what \cref{thm:main} subtracts at $t=1$, \cref{eq:Fs1}. Squares: the three outputs of the three-path interferometer with the object on the internal path $\ket f=(\ket1+\ket2-\ket3)/\sqrt3$ and input $\ket{N_f}$ (\cref{sec:singleshot}), $\varrho(f,m)=(1,1,-1)/9$, real, and so optimal for an absorber, with one negative entry.}
\label{fig:kdplane}
\end{figure}

For an object that is not weak, \cref{eq:Pmt} can be written as
\begin{equation}
P(m|t)-P(m)=(1-\sqrt{t})\Big[(1-\sqrt{t})|\braket{m}{a}|^2P(a)-2\varrho(a,m)\Big],
\label{eq:brighten}
\end{equation}
so, for $t<1$, output $m$ is brighter with the object in place than without it if and only if
\begin{equation}
\varrho(a,m)<\tfrac12(1-\sqrt{t})\,|\braket{m}{a}|^2P(a),
\label{eq:brightcond}
\end{equation}
or, for a bright output, $(1-\sqrt{t})|\wv{m}|^2>2\,\mathrm{Re}[\wv{m}]$. A classical particle that is sometimes removed from path $a$ can make no output brighter, so \cref{eq:brightcond} marks the outputs whose statistics have no such explanation (\cref{sec:singleshot}). \cref{fig:setup}'s two cases are its ends. As $t\to1$ the condition becomes $\varrho(a,m)<0$, a negative KD term. A null port has $P(m)=0$ and therefore $\varrho(a,m)=0$, and satisfies \cref{eq:brightcond} for every $t<1$. Between them are outputs whose positive $\varrho(a,m)$ is small beside $|\braket{m}{a}|^2P(a)$, which darken under a weak absorber and brighten under a strong one. They are what a null port becomes when its extinction is imperfect. For real KD terms their weak value is large, $\wv{m}>2$, because $P(m)$ is small, below $\tfrac14|\braket{m}{a}|^2P(a)$, not because $\varrho(a,m)$ is negative.

\section{Fisher information from the unabsorbed photons}
\label{sec:fisher}

Suppose our task is to estimate $t$ from the photons the object does not absorb. This section bounds the information these photons hold, says which readouts reach the bound, and then drops two of the assumptions behind this bounding, a pure input and one object. \Cref{sec:helps} drops the other two assumptions: that there is no dispersion, and that the photon only makes one pass through the interferometer.

\subsection{The bound and its saturation}
\label{sec:bound}

The Fisher information about $t$ available from the unabsorbed photons, per photon entering the interferometer, is
\begin{equation}
\Fs(t)=\sum_m \frac{[\partial_t P(m|t)]^2}{P(m|t)},
\label{eq:Fsdef}
\end{equation}
with the convention that an output with $P(m|t)=0$ contributes the limit of its term in \cref{eq:Fsdef}, which is finite (see below). This is the information available when the incident flux is known but absorption events are not registered, for instance from the Poisson counts of a source with known mean. Writing $P(m|t)=S(t)q_m(t)$ with $\sum_m q_m=1$ decomposes this into $\Fs=P(a)^2/S(t)+S(t)F_{\mathrm{cond}}$, with the first term coming from the surviving flux and the second term from the conditional distribution of the detected photons. We consider the registered case below.

\begin{theorem}
\label{thm:main}
For every pure single-photon input state, every single-pass geometry, every output basis and every $t\in(0,1]$,
\begin{equation}
\begin{split}
\Fs(t)&=\frac{P(a)}{t}-\frac{1}{t}\sum_{m:P(m|t)>0}\frac{[\mathrm{Im}[\KD(a,m)]]^2}{P(m|t)} \\
&\le \frac{P(a)}{t},
\label{eq:thm}
\end{split}
\end{equation}
with equality if and only if every KD term $\KD(a,m)$ is real.
\end{theorem}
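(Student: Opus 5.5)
The plan is a direct computation from the amplitude \cref{eq:amp}, using the fact that $z_m(t)$ is affine in $\sqrt t$. Write $w_m=\braket{m}{a}\braket{a}{\psi}$, so that $z_m(t)=\braket{m}{\psi}-(1-\sqrt t)w_m$ and $\partial_t z_m=w_m/(2\sqrt t)$. Since $P(m|t)=|z_m|^2$, we get
\begin{equation*}
\partial_t P(m|t)=\frac{\mathrm{Re}[\bar z_m w_m]}{\sqrt t}.
\end{equation*}
This agrees with \cref{eq:linres}, because $\mathrm{Re}[\bar z_m w_m]=\varrho(a,m)-(1-\sqrt t)|\braket{m}{a}|^2P(a)$.

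The key step is the elementary identity $(\mathrm{Re}\,\zeta)^2=|\zeta|^2-(\mathrm{Im}\,\zeta)^2$, applied to $\zeta=\bar z_m w_m$. For each output with $P(m|t)>0$ it gives
\begin{equation*}
\frac{[\partial_t P(m|t)]^2}{P(m|t)}=\frac{|w_m|^2}{t}-\frac{(\mathrm{Im}[\bar z_m w_m])^2}{t\,P(m|t)} .
\end{equation*}
Next I identify the imaginary part. We have $\bar z_m w_m=\braket{\psi}{m}\braket{m}{a}\braket{a}{\psi}-(1-\sqrt t)|\braket{m}{a}|^2P(a)$. The first term is exactly $\KD(a,m)$ of \cref{eq:kd}, and the second is real. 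Hence $\mathrm{Im}[\bar z_m w_m]=\mathrm{Im}[\KD(a,m)]$, independent of $t$. Summing $|w_m|^2=|\braket{m}{a}|^2P(a)$ over all $m$ gives $P(a)$ by completeness of the output basis. This is the same completeness step behind the first sum rule in \cref{eq:sumrules}.

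The step needing care, and the main obstacle, is the outputs with $P(m|t)=0$ at the given $t$. The completeness sum needs them to contribute $|w_m|^2/t$, while the theorem's subtracted sum excludes them. I would handle this via the stated convention that such terms are limits. Near such a $t$, $z_m(t')=(t'-t)w_m/(2\sqrt t)+O((t'-t)^2)$, so $\bar z_m w_m$ is asymptotically real and positive multiple of $|w_m|^2$. The limit of $(\mathrm{Re}\,\zeta)^2/|z_m|^2$ is then $|w_m|^2/t$, with no imaginary remainder. Equivalently, $z_m(t)=0$ forces $\braket{m}{\psi}=(1-\sqrt t)w_m$, so $\KD(a,m)=(1-\sqrt t)|\braket{m}{a}|^2P(a)$ is real. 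If $w_m=0$, the term vanishes trivially. This establishes the equality in \cref{eq:thm}.

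The inequality follows because each subtracted term is nonnegative. Equality holds iff $\mathrm{Im}[\KD(a,m)]=0$ for every $m$ with $P(m|t)>0$. The remaining outputs, those with $P(m|t)=0$, were just shown to have real $\KD(a,m)$. So equality holds iff all KD terms are real.
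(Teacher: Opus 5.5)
Your proof is correct and is essentially the paper's own argument: the split $(\mathrm{Re}\,\zeta)^2=|\zeta|^2-(\mathrm{Im}\,\zeta)^2$ for $\zeta=\bar z_m\partial_t z_m$, the identification of its $t$-independent imaginary part with $\mathrm{Im}[\KD(a,m)]$, completeness for $\sum_m|\partial_t z_m|^2=P(a)/(4t)$, and the observation that a dark output forces $\KD(a,m)$ to be real, so that it contributes the finite limit $|\braket{m}{a}|^2P(a)/t$. One slip is cosmetic: near a dark $t$ you have exactly $\bar z_m(t')w_m=(\sqrt{t'}-\sqrt{t})|w_m|^2$, which is real but negative for $t'<t$ rather than positive, and this is harmless since it enters squared.
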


\begin{proof}
From \cref{eq:amp}, $\partial_t z_m=\braket{m}{a}\braket{a}{\psi}/(2\sqrt{t})$, so, separating the modulus and phase of $z_m$ as in Ref.~\cite[Eq.~(4)]{Braunstein1994} gives
\begin{equation}
\begin{split}
\frac{[\partial_t P(m|t)]^2}{P(m|t)}&=\frac{4\,[\mathrm{Re}(\bar z_m\partial_t z_m)]^2}{|z_m|^2}\\
&=4|\partial_t z_m|^2-\frac{4\,[\mathrm{Im}[\bar z_m\partial_t z_m]]^2}{|z_m|^2}.
\end{split}
\end{equation}
Using $\bar z_m\partial_t z_m=[\KD(a,m)-(1-\sqrt{t})|\braket{m}{a}|^2P(a)]/(2\sqrt{t})$, whose imaginary part is $\mathrm{Im}[\KD(a,m)]/(2\sqrt{t})$, and $\sum_m 4|\partial_t z_m|^2=\sum_m|\braket{m}{a}|^2P(a)/t=P(a)/t$ by completeness, \cref{eq:thm} follows. If $z_m(t)=0$ at some $t$ then $\KD(a,m)-(1-\sqrt{t})|\braket{m}{a}|^2P(a)=0$, so the subtracted term vanishes and the output contributes $4|\partial_t z_m|^2=|\braket{m}{a}|^2P(a)/t$, which is the finite limit referred to above. In particular $\KD(a,m)$ is then real, and $\KD(a,m)$ does not itself depend on $t$, so no output that is dark at some $t$ can bring an imaginary part into the subtracted sum at any other: that sum vanishes at one $t$ only if every $\KD(a,m)$ is real, which is the same condition at every $t$. An orthonormal basis is no restriction on the bound: for any POVM element $\hat E_m\ge0$, Cauchy--Schwarz for $\bra{x}\hat E_m\ket{y}$ gives $[\partial_tP(m|t)]^2/P(m|t)\le4\bra{\partial_tK_t\psi}\hat E_m\ket{\partial_tK_t\psi}$, and summing with $\sum_m\hat E_m=\mathds{1}$ gives $\Fs\le P(a)/t$ again. This step is Eq.~(11) of Ref.~\cite{Hofmann2011} for the generator $\hat\Pi(a)$ and the unnormalised state $K_t\hat\rho K_t$, since $\partial_tK_t=\hat\Pi(a)K_t/(2t)$. For a rank-one POVM $\hat E_m=\proj{v_m}$ the identity itself holds, with $\KD(a,m)=\braket{v_m}{a}\bra{a}\hat\rho\ket{v_m}$.
\end{proof}

The bound therefore holds for any POVM readout, and the identity for any rank-one POVM.

At $t=1$ the theorem gives
\begin{equation}
\begin{split}
\Fs(1)&=\sum_{m:P(m)>0}\frac{\varrho(a,m)^2}{P(m)}+\sum_{m:P(m)=0}|\braket{m}{a}|^2P(a)\\
&= P(a)-\sum_{m:P(m)>0}\frac{[\mathrm{Im}[\KD(a,m)]]^2}{P(m)},
\label{eq:Fs1}
\end{split}
\end{equation}
by the second sum rule in \cref{eq:sumrules}. The first form comes from inserting the linear response \cref{eq:linres}; the null ports contribute through the Elitzur--Vaidman term in \cref{eq:Pmt}, whose probability and squared derivative both vanish as $(1-t)^2$. The second form shows that negativity of $\varrho(a,m)$ is irrelevant to the total, which is fixed by the norm $\|\partial_t K_t\ket{\psi}\|^2$, so the only thing that can reduce the Fisher information below $P(a)$ is an imaginary part of the KD distribution. This form is the absorption counterpart of Eq.~(5) of Ref.~\cite{Hofmann2011}, in which the Fisher information about a phase is, up to a factor of four, the generator's variance less the variance of its real weak values: here the second moment $P(a)$ of $\hat\Pi(a)$ replaces the variance, and the real and imaginary parts exchange roles because $K_t=t^{\hat\Pi(a)/2}$ has a real exponent where a phase shift has an imaginary one. At $t<1$ the sign of $\varrho(a,m)$ does enter \cref{eq:thm}, through the occupation $P(m|t)$ that divides each subtracted term: an output that brightens dilutes the loss its own imaginary part causes, and one that darkens concentrates that loss. That changes how much a information a complex readout loses, but neither creates nor removes a loss, and the bound and its equality condition are unchanged. Imaginary parts are a property of the readout basis, not of the object: choosing the output basis so that all $\KD(a,m)$ are real restores the equality, and for a real input state every real readout is optimal. The phase of $\KD(a,m)$ is the angle, at output $m$, between the amplitude that arrived through path $a$ and the total amplitude there. When the two are in phase or in antiphase, attenuating the path changes the output's intensity at first order, which counting sees; when they are in quadrature it changes the output's phase instead, which counting does not see, and that share of the information is lost. \Cref{fig:fisher}(a) illustrates this for a two-path interferometer with a relative phase $\varphi$ between the arms and a fixed balanced readout. For $\varphi\in\{0,\pi\}$ the bound is saturated at every $t$, while any other $\varphi$ falls short of it by the ratio $1-(1+t)\sin^2\varphi/[(1+t)^2-4t\cos^2\varphi]$. This ratio tends to $\cos^2\varphi$ as $t\to0$ and is $1/2$ at $t=1$ and, for $|\cos\varphi|>\tfrac12$, dips below both to $1-(1+|\cos\varphi|)/(4|\cos\varphi|)$ at $t=2|\cos\varphi|-1$; the loss is recovered by rotating the readout by $\varphi$.

\begin{figure*}[t]
\centering
\panel{a}{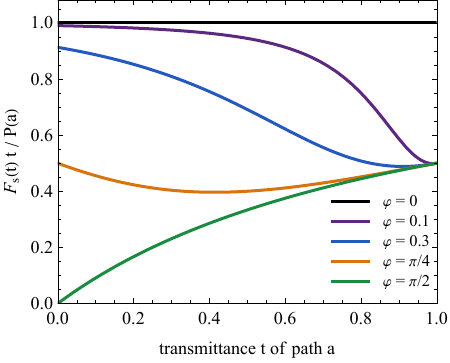}\hfill
\panel{b}{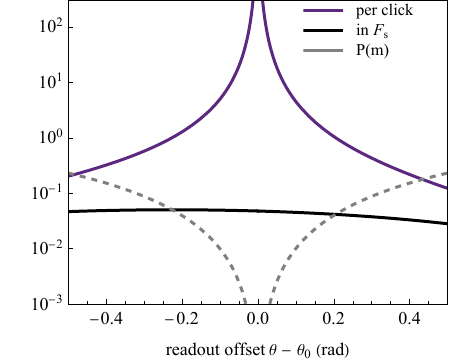}
\caption{(a) Fisher information from the unabsorbed photons, in units of the bound $P(a)/t$, against the transmittance $t$, for a two-path interferometer with input $(\ket{b}+e^{i\varphi}\ket{a})/\sqrt2$ and readout $(\ket{b}\pm\ket{a})/\sqrt2$; the curves are the values of $\varphi$ shown. The relative phase $\varphi$ makes the KD terms complex; the missing information is given by \cref{eq:thm}'s second term, and vanishes for a readout rotated by $\varphi$. (b) Weak-value amplification of an absorber, for $t=1$. Two paths with input $\sqrt{1-P(a)}\ket b+\sqrt{P(a)}\ket a$, $P(a)=0.05$, and readout $\ket{m_-}=\sin\theta\ket{b}-\cos\theta\ket{a}$, plotted on a logarithmic scale against the offset of $\theta$ from the angle $\theta_0=\arcsin\sqrt{P(a)}$ at which $\ket{m_-}$ is dark. The Fisher information per detected photon in the port, $(\mathrm{Re}[\wv{-}])^2$ (purple, ``per click''), diverges as the port darkens (at $t=1$ only; for $t<1$ it saturates for $\theta>\theta_0$, where $\mathrm{Re}[\wv{-}]<0$, \cref{sec:bound}), while its contribution to $\Fs$, $P(m_-)(\mathrm{Re}[\wv{-}])^2=P(a)\cos^2\theta$ (black, ``in $\Fs$''), never exceeds $P(a)$; the port occupancy $P(m_-)$ is the dashed grey curve.}
\label{fig:fisher}
\end{figure*}

\begin{corollary}[per absorbed photon]
\label{cor:absorbed}
The object absorbs $(1-t)P(a)$ photons per trial, so $\Fs(t)/[(1-t)P(a)]\le 1/[t(1-t)]$. Sending the whole photon through the object and detecting it ($P(a)=1$, one output) gives $\Fs=1/t$ and $(1-t)$ absorbed photons per trial, hence information-per-photon $1/[t(1-t)]$. No single-pass interferometer, single-photon input or readout extracts more Fisher information per absorbed photon from the unabsorbed photons than direct transmission (for a mixed input, by \cref{eq:mixed}), and every configuration with real KD terms and a pure input extracts as much, null ports included.
\end{corollary}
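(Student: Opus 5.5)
The corollary follows from \cref{thm:main} by normalising by the absorbed flux. I will establish four claims in order: the general bound, the direct-transmission baseline, the comparison, and the saturation statement.

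\emph{The bound.} From the survival probability $S(t)=1-(1-t)P(a)$, the expected number of absorbed photons per trial is $1-S(t)=(1-t)P(a)$. For $t\in(0,1)$ and $P(a)>0$, dividing the inequality $\Fs(t)\le P(a)/t$ of \cref{eq:thm} by this positive quantity gives $\Fs/[(1-t)P(a)]\le 1/[t(1-t)]$. The factor $P(a)$ cancels, and this cancellation is the whole content of the claim that no choice of input weighting on the object's path helps. The degenerate cases are handled separately. If $P(a)=0$, then nothing is absorbed and $\Fs=0$. If $t=1$, the ratio is undefined, and the statement is read for $t<1$.

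\emph{The baseline.} I will specialise \cref{eq:Pmt} to $\ket\psi=\ket a$ with a single output, or equivalently take $P(a)=1$ and treat the photon as one click/no-click channel with $P(\text{click})=t$. This gives $\Fs=1/t$. All KD terms are real here, so the baseline is the equality case of \cref{thm:main}. Dividing by the absorbed flux $1-t$ gives $1/[t(1-t)]$, which matches the bound, so no single-pass configuration can beat direct transmission.

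\emph{Saturation and scope.} For a pure input whose KD terms $\KD(a,m)$ are all real, \cref{thm:main} gives $\Fs=P(a)/t$ exactly at every $t\in(0,1]$. The subtracted sum vanishes, and the proof of the theorem already shows that null ports contribute their finite limit $|\braket{m}{a}|^2P(a)/t$ rather than breaking the identity. Dividing by $(1-t)P(a)$ then yields exactly $1/[t(1-t)]$. Extension to POVM readouts uses the Cauchy--Schwarz step in the proof of \cref{thm:main}.

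\emph{The main obstacle.} The hard step is the mixed-input clause, which relies on \cref{eq:mixed}, stated later. There I would argue by convexity. Write $\hat\rho=\sum_k p_k\proj{\psi_k}$. The absorbed flux is linear, $(1-t)\sum_kp_kP_k(a)$. Classical Fisher information is convex under mixing of the output distributions, so $\Fs[\hat\rho]\le\sum_kp_k\Fs[\psi_k]\le\sum_kp_kP_k(a)/t$. The ratio bound then follows unchanged. I expect this convexity step, together with the boundary conventions for outputs with $P(m|t)=0$, to be the only points needing care.
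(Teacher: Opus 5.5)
Your bound, baseline and saturation steps are the paper's own argument: the corollary is \cref{thm:main} divided by the absorbed flux $(1-t)P(a)$, with direct transmission, and pure inputs whose KD terms are real, as the equality case. The one place you take a genuinely different route is the mixed-input clause.

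The paper reruns the proof of \cref{thm:main} with $x_m=\braket{m}{a}\bra{a}\hat\rho K_t\ket{m}$. This gives the exact identity \cref{eq:mixed}, whose extra subtracted term is non-negative because $|\bra{a}\hat\rho K_t\ket{m}|^2\le P(a)P(m|t)$. You instead use joint convexity of $y^2/x$ for the mixture $P(m|t)=\sum_kp_kP_k(m|t)$, whose weights do not depend on $t$, together with linearity of $P(a)$ in $\hat\rho$. That argument is correct and shorter.

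The boundary case you flag is harmless. If $P(m|t_0)=0$, every component amplitude, which is affine in $\sqrt t$, vanishes at $t_0$. The mixture's term then has the limit $|\braket{m}{a}|^2P(a)/t_0$, the weighted sum of the components' limits. A component with $P_k(m|t)=0$ inside a bright mixed output enters the convexity inequality with value zero, which is below its assigned limit.

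What convexity cannot give, and \cref{eq:mixed} does, is the size of the shortfall. It also cannot show that real KD terms no longer guarantee equality for a mixed input (\cref{eq:visibility}). The corollary claims saturation only for pure inputs, so your argument still covers everything it asserts.

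One wording point: calling direct transmission a click/no-click channel suggests the Bernoulli value $1/[t(1-t)]$, which is the registered information. The value $\Fs=1/t$ comes from the click outcome alone, since absorptions are not outcomes in $\Fs$.
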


If the photon number is known, such that the absence of a click is itself an outcome of probability $(1-t)P(a)$ with derivative $-P(a)$, the total Fisher information is $\Fs(t)+P(a)/(1-t)\le P(a)/[t(1-t)]$, i.e., at most $1/[t(1-t)^2]$ per absorbed photon. This is the single-pass case of Wildfeuer's bound \cite[Eq.~(25)]{Wildfeuer2026}, and of the earlier bounds of Ref.~\cite{Massar2001}, for any protocol distinguishing close transparencies, and Ref.~\cite{Facchi2002}, for binomial statistics; the factor $1/(1-t)$ between the two cases is the $\beta^2=1-t$ that separates Ref.~\cite[Sec.~II]{Massar2001}'s absorbed-photon prefactors, $\beta^4$ for a known photon number against $\beta^2$ for coherent light, which must therefore absorb $1/(1-t)$ times as many photons for the same resolution. This argument also shows that every single-pass readout with real KD terms saturates the bound for a pure input (mixed inputs are the subject of \cref{sec:extensions}), which is why the Elitzur--Vaidman interferometer and direct transmission give the same Fisher information per absorbed photon \cite{Wildfeuer2026}. \cref{eq:Fs1} identifies the mechanism behind this. Detectors with the same efficiency change neither conclusion (see Appendix~\ref{app:efficiency}).

\begin{corollary}[per detected photon]
\label{cor:detected}
The Fisher information per photon detected in output $m$ is $[\partial_t\ln P(m|t)]^2$, which at $t=1$ equals $(\mathrm{Re}[\wv{m}])^2$, the square of the real part of the weak value of the path projector. Classically $\mathrm{Re}[\wv{m}]\in[0,1]$, so this is at most one; a weak value with $|\mathrm{Re}[\wv{m}]|>1$, which is necessarily anomalous, gives more than one unit of Fisher information per detected photon in that port.
\end{corollary}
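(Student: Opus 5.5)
The proof has three parts: compute the per-click information of output $m$, evaluate it at $t=1$ through the linear response and weak value of \cref{sec:setting}, and then use the classical range of a conditional probability to get the bound of one.

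Output $m$ contributes $[\partial_t P(m|t)]^2/P(m|t)$ to $\Fs$ in \cref{eq:Fsdef}. This output has probability $P(m|t)$ per trial, so dividing by $P(m|t)$ gives the information per photon detected there:
\begin{equation}
\frac{[\partial_t P(m|t)]^2}{P(m|t)^2}=[\partial_t\ln P(m|t)]^2 .
\end{equation}
This holds for any $m$ with $P(m|t)>0$.

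At $t=1$ we need $P(m)>0$, since the quantity is undefined for a null port at $t=1$. For such a bright output, \cref{eq:linres} gives $\partial_tP(m|t)|_{t=1}=\varrho(a,m)$. Dividing by $P(m)$ and using \cref{eq:wv}, we get $\partial_t\ln P(m|t)|_{t=1}=\mathrm{Re}[\KD(a,m)]/P(m)=\mathrm{Re}[\wv{m}]$. We can cross-check this by differentiating the logarithm of \cref{eq:LR}: with $u=1-\sqrt t$ and $w=\wv{m}$,
\begin{equation}
\partial_t\ln|1-uw|^2\big|_{u=0}=-2\,\mathrm{Re}[w]\,\partial_t u=\mathrm{Re}[w],
\end{equation}
since $\partial_tu=-1/(2\sqrt t)=-1/2$ at $t=1$. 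Squaring either form gives $(\mathrm{Re}[\wv{m}])^2$.

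For the classical statement, we interpret $\wv{m}$ as the conditional probability $P(a|m)$ that a photon found in $m$ passed through $a$. A conditional probability lies in $[0,1]$, so its square is at most one. Conversely, $|\mathrm{Re}[\wv{m}]|>1$ forces $\mathrm{Re}[\wv{m}]>1$ or $\mathrm{Re}[\wv{m}]<-1<0$. Either case is outside $[0,1]$, so the weak value is anomalous by the definition in \cref{sec:setting}, and the per-click information then exceeds one.

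The main obstacle is not the algebra but the scope of the claim. The identification with the weak value needs $P(m)>0$, so null ports must be excluded. The bound of one is only a classical benchmark, and it must not be read as a bound on the total. By \cref{thm:main}, the contribution $P(m)(\mathrm{Re}[\wv{m}])^2$ that this port makes to $\Fs$ still cannot push $\Fs$ above $P(a)$, which is exactly what \cref{fig:fisher}(b) illustrates.
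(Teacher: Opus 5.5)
Your proposal is correct and follows the paper's route. You divide the port's term in \cref{eq:Fsdef} by its occupation $P(m|t)$, use \cref{eq:linres} with \cref{eq:wv} to turn the $t=1$ value into $(\mathrm{Re}[\wv{m}])^2$, and read the weak value classically as a conditional probability in $[0,1]$ to get the bound and the anomaly claim. Your cross-check via \cref{eq:LR} and your exclusion of null ports at $t=1$, where $\partial_t\ln P(m|t)$ diverges, are both consistent with the paper.
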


This is weak-value amplification of an absorber. The information is not created but moved: by \cref{eq:Fs1} output $m$ contributes $\varrho(a,m)^2/P(m)=P(m)(\mathrm{Re}[\wv{m}])^2$ to $\Fs(1)$, which is at most $|\braket{m}{a}|^2P(a)$ because $|\KD(a,m)|^2=P(m)|\braket{m}{a}|^2P(a)$, and these contributions sum to at most $P(a)$. A port giving more than one unit per detected photon must therefore be a weak port, $P(m)<|\braket{m}{a}|^2P(a)\le P(a)$: the amplification is a smaller share of the photons, not a larger share of the information. For a negative weak value it also needs a weak absorber. For real KD terms, \cref{eq:LR} gives the information per detected photon at $t<1$ as $w^2/[t(1-(1-\sqrt{t})w)^2]$, with $w=\wv{m}$. For negative $w$ this grows with $|w|$ only until $(1-\sqrt{t})|w|$ is of order one, and never exceeds $1/[t(1-\sqrt{t})^2]$, its value at a port that is dark without the object, so the gain per detected photon over direct transmission is at most $t/(1-\sqrt{t})^2$, set by the absorber and not by how dark the port is. \Cref{fig:fisher}(b) shows the two-path case, where the per-photon information in a nearly dark port diverges as $1/\sin^2(\theta-\theta_0)$ while its contribution to $\Fs$ is $P(a)\cos^2\theta$. This is the trade-off analysed for weak-value amplification of unitary parameters, where postselection also cannot raise the information per trial \cite{Ferrie2014,Combes2014}; for a phase in a two-path interferometer, Fukuda \textit{et al} have concentrated the information per detected photon in the dark port to the level of a four-photon NOON fringe by a small path--polarisation entanglement, which required negative KD terms \cite{Fukuda2026}. Whether concentrating the information into few detected particles helps in practice is disputed.\footnote{Jordan \textit{et al} argue it can help, since a weak port can hold as much Fisher information as the full beam while the discarded light is free for another experiment or for recycling, and some technical noises are suppressed \cite{Jordan2014}; Knee and Gauger find no advantage against detector pixelation or jitter \cite{KneeGauger2014}.} For an absorber, Ref.~\cite{Ji2026}'s generator is the path projector, and \cref{cor:detected}'s real part of the weak value is its conditional average, Eq.~(6) there, anomalous when it leaves the eigenvalue range $[0,1]$; the sum rule behind \cref{eq:Fs1} is the absorption counterpart of their decomposition of the variance (Appendix~\ref{app:efficiency}).

For comparison with the interferometer-free bounds we list in the Introduction, a coherent state of mean photon number $N$ sent into the same optics leaves every output in a coherent state of amplitude $\sqrt N z_m(t)$, and the quantum Fisher information of that family is $4N\sum_m|\partial_tz_m|^2=NP(a)/t$, which for $P(a)=1$ is the coherent-state value $N/t$ \cite{Monras2007}. Photon counting at the outputs of coherent light gives $N\Fs(t)$, so \cref{thm:main} says which readouts reach the coherent-state limit: those with real KD terms. By Nair's bound the registered bound holds for every multi-photon input to a single-pass interferometer, entangled across the paths or not (Appendix~\ref{app:efficiency}).

\subsection{Mixed inputs and several objects}
\label{sec:extensions}

Neither a mixed input nor several absorbers gives the interferometer any informational benefit over direct transmission.

\textit{Mixed inputs.} \Cref{eq:Pmt,eq:linres} are linear in $\hat\rho$ and hold unchanged for a mixed input, with $\KD(a,m)=\braket{m}{a}\bra{a}\hat\rho\ket{m}$ and $P(a)=\bra{a}\hat\rho\ket{a}$. Writing $\partial_tP(m|t)=\mathrm{Re}[x_m]/\sqrt{t}$ with $x_m=\braket{m}{a}\bra{a}\hat\rho K_t\ket{m}$, whose imaginary part is again $\mathrm{Im}[\KD(a,m)]$, and using $(\mathrm{Re}[x_m])^2=|x_m|^2-(\mathrm{Im}[x_m])^2$ as in the proof of \cref{thm:main},
\begin{equation}
\begin{split}
&\Fs(t)=\frac{P(a)}{t}-\frac1t\sum_{m:\,P(m|t)>0}\frac{[\mathrm{Im}[\KD(a,m)]]^2}{P(m|t)}\\
&-\frac1t\sum_{m:\,P(m|t)>0}|\braket{m}{a}|^2\bigg[P(a)-\frac{|\bra{a}\hat\rho K_t\ket{m}|^2}{P(m|t)}\bigg].
\label{eq:mixed}
\end{split}
\end{equation}
The last sum is new. Each of its terms is non-negative by the Cauchy--Schwarz inequality, $|\bra{a}\hat\rho K_t\ket{m}|^2\le\bra{a}\hat\rho\ket{a}\bra{m}K_t\hat\rho K_t\ket{m}$, and each vanishes for a pure state. The bound $\Fs\le P(a)/t$ therefore holds for every input, but real KD terms no longer guarantee equality: the readout must also not rely on coherence that the input lacks. Reading out in the path basis always does this, since then only $m=a$ has $\braket{m}{a}\neq0$ and $K_t\ket{a}=\sqrt{t}\ket{a}$; this is direct transmission of the fraction $P(a)$ of the light.

The readouts that lose most are those with ports the input's coherence would leave dark. Take the balanced Elitzur--Vaidman interferometer with the coherence between its arms reduced by a visibility $V$, $\hat\rho=\tfrac12(\proj{a}+\proj{b})+\tfrac V2(\ketbra{a}{b}+\ketbra{b}{a})$. Then $P(\pm|t)=(1+t\pm2V\sqrt{t})/4$, the bright port contributes $(1+V/\sqrt{t})^2/[4(1+t+2V\sqrt{t})]$ to $\Fs$, and the dark port contributes
\begin{equation}
\frac{[(1-\sqrt{t})-(1-V)]^2}{4t\,[(1-\sqrt{t})^2+2\sqrt{t}(1-V)]} .
\label{eq:visibility}
\end{equation}
At $V=1$ each port gives $1/(4t)$ and their sum is $P(a)/t$. For any $V<1$ the dark-port term tends to $(1-V)/8$ as $t\to1$ and the total to $1/4=P(a)/2$: half the information is lost, however good the visibility, once the object is weak enough that $(1-\sqrt{t})^2\ll2(1-V)$. The reason is that the null port holds its share of the information in a probability of order $(1-\sqrt{t})^2$, which any background in that port swamps. Half is the balanced interferometer's share. More generally (Appendix~\ref{app:mixed}), each port keeps of its share of path $a$ only the fraction of its counts that comes from the part of the input coherent with that path. A port that this part would leave dark, or nearly so, forfeits its share once a background fills it, while an ideal null port, dark for the whole input, loses nothing. The pure-input qualifier in \cref{cor:absorbed} is therefore needed.

\textit{Several objects.} Take a pure input again, $\hat\rho=\proj\psi$, and let every path $j$ contain an object of transmittance $t_j$ ($t_j=1$ for an empty path), so that $z_m=\sum_j\sqrt{t_j}\braket{m}{j}\braket{j}{\psi}$. The Fisher information matrix of the unabsorbed photons is $F_{jk}=\sum_m\partial_{t_j}P(m|\mathbf t)\,\partial_{t_k}P(m|\mathbf t)/P(m|\mathbf t)$, at the given transmittances $\mathbf t$. Completeness of the output basis now gives $\sum_m\overline{\partial_{t_j}z_m}\,\partial_{t_k}z_m=\delta_{jk}P(j)/(4t_j)$: the derivatives with respect to different paths are orthogonal. The steps of the proof of \cref{thm:main} then give, at any $\mathbf t$ at which every output is occupied (the term of an empty output has no single limit when several transmittances vary),
\begin{equation}
F_{jk}=\frac{P(j)}{t_j}\,\delta_{jk}-4\sum_m\frac{\mathrm{Im}[\bar z_m\partial_{t_j}z_m]\,\mathrm{Im}[\bar z_m\partial_{t_k}z_m]}{P(m|\mathbf t)},
\label{eq:fim}
\end{equation}
and the subtracted matrix is positive semidefinite, so $F\le\mathrm{diag}[P(j)/t_j]$ as a matrix inequality: no linear combination of the transmittances is estimated better than by sending the fraction $P(j)$ of the light straight through each object. For a real input state and a real readout, a real orthogonal mixing of the paths such as a Hadamard transform, the subtracted matrix vanishes: every $t_j$ is then estimated at its direct-transmission value at once, and the estimates are uncorrelated. Appendix~\ref{app:several} explains the restriction on $\mathbf t$ and gives the sum rule over the KD terms of all paths that lies behind \cref{eq:fim} for weak objects. With the absorbed photons registered the corresponding statement is Nair's bound on the quantum Fisher information matrix of several loss elements probed in parallel, $4\,\mathrm{diag}(N_k)$ in the angles $\cos\theta_k=\sqrt{t_k}$ and $\mathrm{diag}\{N_k/[t_k(1-t_k)]\}$ in the transmittances \cite{Nair2018}.

\section{When interference helps}
\label{sec:helps}

\Cref{thm:main} leaves two ways an interferometer can increase the information per absorbed photon, which we discuss here. The first is that a dispersive object adds a phase response, which can turn the imaginary parts of the KD terms into signal rather than loss (\cref{sec:dispersive}). The second is that many passes add interference between the responses of different passes, which can allow us to gain information, but only for the case where the absorptions go unregistered (\cref{sec:multipass}).

\subsection{A dispersive object}
\label{sec:dispersive}

The two kinds of object (absorbers, and weak phase objects), ask opposite things of the readout. An absorber is best read out with real weak values, whatever their spread. A weak phase object is best read out with $\mathrm{Re}[\wv{m}]=P(a)$ at every output, so that absorption changes the outputs only through the survival probability and no real part is anomalous. The information is then held in the imaginary parts, which only subtract from the information about an absorber but are the whole signal for a phase. Dispersion is the one assumption of \cref{thm:main} whose relaxation lets a single pass beat direct transmission, and the phase term is unchanged when the absorbed photons are registered.

With $g=\sqrt{t} e^{i\phi}$ (as in \cref{eq:response}), let both $t$ and $\phi$ depend on the quantity $c$ to be estimated, a concentration or a thickness, with primes denoting derivatives with respect to $c$. The survival probability $S$ is unchanged. Now $\partial_cz_m=g'\braket{m}{a}\braket{a}{\psi}$, and the first step of the proof of \cref{thm:main} gives
\begin{equation}
\Fs(c)=\Big(\frac{t'^2}{t}+4t\phi'^2\Big)P(a)-4\sum_{m:P(m|c)>0}\frac{[\mathrm{Im}[\bar z_m\partial_cz_m]]^2}{P(m|c)} .
\label{eq:Fsc}
\end{equation}
The difference from pure absorption is that the subtracted term can no longer be made to vanish, because $\sum_m\mathrm{Im}[\bar z_m\partial_cz_m]=\mathrm{Im}[\bra{\psi}K^\dagger\partial_cK\ket{\psi}]=t\phi'P(a)$ is fixed by the object.

\begin{theorem}
\label{thm:dispersive}
For every pure single-photon input state, every single-pass geometry, every output basis and every $c$ with $t(c)\in(0,1]$,
\begin{equation}
\Fs(c)\le P(a)\Big[\frac{t'^2}{t}+4t\phi'^2\,\frac{1-P(a)}{S}\Big],
\label{eq:thmdisp}
\end{equation}
with equality if and only if $\mathrm{Im}[\partial_c\ln z_m]$ takes the same value for every output with $P(m|c)>0$. An output basis with this property exists at every $c$.
\end{theorem}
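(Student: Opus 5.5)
Start from \cref{eq:Fsc} and bound the subtracted sum from below using the constraint that its numerators have a fixed total. Write $y_m=\mathrm{Im}[\bar z_m\partial_cz_m]$ and $p_m=P(m|c)$. By Cauchy--Schwarz over the occupied outputs,
\begin{equation}
\Big(\sum_{m:p_m>0}y_m\Big)^2\le\Big(\sum_{m:p_m>0}\frac{y_m^2}{p_m}\Big)\Big(\sum_{m:p_m>0}p_m\Big)=S\sum_{m:p_m>0}\frac{y_m^2}{p_m}.
\end{equation}
Here I use $\sum_m p_m=S$, the survival probability.

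Next I must check that the total $\sum_m y_m$ equals $t\phi'P(a)$ even when the sum runs only over occupied outputs. An output with $p_m=0$ has $z_m=0$, so $y_m=0$ and nothing is lost. The total itself follows from completeness: $\sum_m\bar z_m\partial_cz_m=\bra{\psi}K^\dagger\partial_cK\ket{\psi}$ with $K=\mathds{1}-(1-g)\proj a$. This gives $\bar g g'P(a)$, and since $\mathrm{Im}[\bar gg']=\tfrac12\mathrm{Im}[\partial_c|g|^2]+|g|^2\phi'$, the imaginary part is $t\phi'P(a)$. The subtracted term is therefore at least $4t^2\phi'^2P(a)^2/S$.

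Inserting this into \cref{eq:Fsc}, the phase part becomes $4t\phi'^2P(a)[1-tP(a)/S]$. Since $S=1-(1-t)P(a)$, we have $S-tP(a)=1-P(a)$, and this yields \cref{eq:thmdisp}. Equality in Cauchy--Schwarz holds iff $y_m/p_m$ is constant over occupied outputs. That ratio is $\mathrm{Im}[\partial_c z_m/z_m]=\mathrm{Im}[\partial_c\ln z_m]$, which is exactly the stated condition.

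The main obstacle is existence: I need an orthonormal basis in which $\mathrm{Im}[\partial_c\ln z_m]$ takes the same value $\lambda$ at every occupied output. Set $\ket u=K\ket\psi$, $\ket v=\partial_cK\ket\psi=g'\braket a\psi\ket a$, and $\lambda=t\phi'P(a)/S$. The condition $\mathrm{Im}[\bar z_m\partial_c z_m]=\lambda|z_m|^2$ then reads $\mathrm{Im}[\braket u m\braket m v]=\lambda\,\braket um\braket mu$. This is equivalent to $\mathrm{Im}[\braket um\braket m w]=0$ for each $m$, where $\ket w=\ket v-i\lambda\ket u$. By the choice of $\lambda$, $\mathrm{Im}\braket uw=0$.

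The problem therefore reduces to the standard fact \cite{Braunstein1994} that for two vectors with real overlap there is an orthonormal basis making every $\braket um\braket mw$ real. I would construct it by restricting to $\mathrm{span}\{\ket u,\ket w\}$, where $\mathrm{Im}\braket uw=0$ lets both vectors be written with real coefficients. Take any real orthonormal basis of that two-dimensional plane, and complete it arbitrarily; the completing vectors contribute zero. If $\ket u$ and $\ket w$ are parallel, or $\ket w=0$, the construction degenerates, but then any basis containing $\ket u/\|\ket u\|$ works. Outputs with $p_m=0$ are excluded by the condition, so they cause no trouble. The delicate points to verify are the real-coefficient representation and these degenerate cases.
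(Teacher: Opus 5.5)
Your proof of the bound and of the equality condition is the paper's argument: Cauchy--Schwarz over the occupied outputs against the fixed total $\sum_m\mathrm{Im}[\bar z_m\partial_cz_m]=t\phi'P(a)$, followed by $1-tP(a)/S=[1-P(a)]/S$. One small slip: your identity $\mathrm{Im}[\bar gg']=\tfrac12\mathrm{Im}[\partial_c|g|^2]+|g|^2\phi'$ holds only because its first term vanishes, and what you want is $\bar gg'=\tfrac12\partial_c|g|^2+i|g|^2\phi'$.

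You take a different route for existence. The paper does not return to the equality condition there. It states that the right-hand side equals $(\partial_cS)^2/S+S\mathcal Q$, with $\mathcal Q$ the quantum Fisher information of $\ket{e_1}=K\ket\psi/\sqrt S$. It then cites Braunstein and Caves for a projective measurement with $F_{\mathrm{cond}}=\mathcal Q$, namely a basis containing $(\ket{e_1}\pm\ket{e_2})/\sqrt2$.

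You meet the equality condition directly. You shift to $\ket w=\ket v-i\lambda\ket u$, so that $\braket{u}{w}$ is real, and take any real orthonormal basis of $\mathrm{span}\{\ket u,\ket w\}$. This is correct, including the degenerate case: if $\ket w=\mu\ket u$ then $\mu$ is real and every basis works. Also $\ket u\neq0$ because $S\ge t>0$.

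What your route buys:
\begin{itemize}
\item It is self-contained, and it gives the whole family of optimal readouts rather than one. Since $\ket w-\braket{u}{w}\ket u/S=\ket v-\braket{u}{v}\ket u/S$, which is $\sqrt S$ times the component of $\partial_c\ket{e_1}$ orthogonal to $\ket{e_1}$, your plane is the paper's span of $\ket{e_1}$ and $\ket{e_2}$. The paper's basis is the member of your family at $45^\circ$ to $\ket{e_1}$.
\item It casts optimality as reality of every $\braket{u}{m}\braket{m}{w}$. For an object that only absorbs, this is exactly \cref{thm:main}'s condition that every $\KD(a,m)$ be real.
\end{itemize}

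What the paper's route buys is an explanation of the bound's value. It is the surviving-flux information plus $S$ times the quantum Fisher information of the conditional state, which ties it to the decomposition $\Fs=(\partial_cS)^2/S+SF_{\mathrm{cond}}$ of \cref{sec:bound}. The price is an identity the paper states without derivation (it does hold) and reliance on an external theorem.
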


\begin{proof}
By the Cauchy--Schwarz inequality,
\begin{equation}
\begin{split}
\sum_m\frac{[\mathrm{Im}[\bar z_m\partial_cz_m)]]^2}{P(m|c)}&\ge\frac{\big[\sum_m\mathrm{Im}[\bar z_m\partial_cz_m]\big]^2}{\sum_mP(m|c)}\\
&=\frac{t^2\phi'^2P(a)^2}{S},
\end{split}
\end{equation}
with equality if and only if $\mathrm{Im}[\bar z_m\partial_cz_m]/P(m|c)=\mathrm{Im}[\partial_c\ln z_m]$ does not depend on $m$. Inserting this into \cref{eq:Fsc}, and using $1-tP(a)/S=[1-P(a)]/S$ gives \cref{eq:thmdisp}. For attainability, let us write $\Fs=(\partial_cS)^2/S+S\,F_{\mathrm{cond}}$ as in \cref{sec:bound}. The right-hand side of \cref{eq:thmdisp} equals $(\partial_cS)^2/S+S\mathcal Q$, where $\mathcal Q$ is the quantum Fisher information of the normalised surviving state $\ket{e_1}=K\ket\psi/\sqrt S$, and a projective measurement with $F_{\mathrm{cond}}=\mathcal Q$ exists for every pure-state family \cite{Braunstein1994}; here it is the eigenbasis of the symmetric logarithmic derivative, a basis containing $(\ket{e_1}\pm\ket{e_2})/\sqrt2$, where $\ket{e_2}$ is the component of $\partial_c\ket{e_1}$ orthogonal to $\ket{e_1}$, divided by its norm.
\end{proof}

For $\phi'=0$, the typical value of $\mathrm{Im}[\partial_c\ln z_m]$ is zero, and \cref{thm:dispersive} is \cref{thm:main}. The optimal readout in general depends on $c$, so reaching the bound needs a prior estimate or an adaptive readout, as for any phase. The content is clearest for a weak object, $g=1$. There $\bar z_m\partial_cz_m=g'\,\KD(a,m)$ with $g'=t'/2+i\phi'$, so that, as \cref{eq:response} says,
\begin{equation}
\partial_cP(m)=t'\,\varrho(a,m)-2\phi'\,\mathrm{Im}[\KD(a,m)].
\label{eq:complexresponse}
\end{equation}
The Fisher information is $\sum_mP(m)\left(t'\,\mathrm{Re}[\wv{m}]-2\phi'\,\mathrm{Im}[\wv{m}]\right)^2$, and the equality condition of \cref{thm:dispersive} becomes $\tfrac12t'\,\mathrm{Im}[\wv{m}]+\phi'\,\mathrm{Re}[\wv{m}]=\phi'P(a)$ for every $m$ with $P(m)>0$. Setting $\phi'=0$ and $t'=0$ in turn gives the two readouts contrasted above.

Per absorbed photon, \cref{eq:thmdisp} is
\begin{equation}
\frac{\Fs(c)}{(1-t)P(a)}\le\frac{t'^2}{t(1-t)}+\frac{4t\phi'^2}{1-t}\,\frac{1-P(a)}{S} .
\label{eq:disp-abs}
\end{equation}
The first term is direct transmission, which no interferometer improves on. The second term is absent for direct transmission, which has $P(a)=1$ and is blind to $\phi$; it is largest when the interferometer sends little light through the object, $P(a)\to0$, where it is $4t\phi'^2/(1-t)$. (With the absorbed photons registered the first term becomes $t'^2/[t(1-t)^2]$ and the second is unchanged.) For a line of Lorentzian shape, $g=\exp[-c\kappa/(\gamma-i\delta)]$ with detuning $\delta$ and half-width $\gamma$, the ratio of the second term to the first as $P(a)\to0$ is $(\delta/\gamma)^2$, the same quadratic growth with detuning that Wildfeuer found for phase sensing per absorbed photon \cite{Wildfeuer2026b}. More than a half-width from resonance, most of the information per absorbed photon is in the term that direct transmission cannot reach. This explains, already for a single pass, Ref.~\cite{Wildfeuer2026}'s remark that a Kramers--Kronig phase, building up over passes, takes the object outside the absorption bounds. It does not conflict with \cref{cor:absorbed}: the additional information is about $\phi$, it comes from the imaginary parts of the KD terms, and negativity of the real parts plays no role in it. The same trade appears in the phase-and-loss estimation literature, where the loss is a known channel parameter and the phase the one estimated \cite{Escher2011,Demkowicz2012}, and where the trade-off in estimating both at once has been analysed \cite{Crowley2014}; here they are one parameter.

For a Lorentzian line the ratio $(\delta/\gamma)^2$ compares the two terms at the same detuning; for a weak sample and $P(a)\to0$ they sum to the resonant direct-transmission value at every detuning, so a gain over probing on resonance needs either an optically thick sample, whose direct transmission is limited by its optical depth, or a line broadened beyond the natural width that sets its wing. A Lorentzian sample of resonant optical depth $x_0$ then gains up to $1.54\,x_0$ per absorbed photon over the best direct transmission with the absorptions unregistered (for $x_0\ge1.59$) and up to $1+x_0$ with them registered, both as $P(a)\to0$, and, with them unregistered, a room-temperature caesium vapour cell probed in the Lorentzian wing of its Doppler-broadened line gains about $50$ times the resonant direct-transmission value of a weak sample, whereas a molecular band in solution offers little phase gain and a sample inside a ring resonator none of this kind (Appendix~\ref{app:dispersive}).

\subsection{Many passes}
\label{sec:multipass}

For an object that only absorbs, no fixed arrangement of one photon beats direct transmission once the absorbed photons are registered \cite{Wildfeuer2026}, and Wildfeuer lists absorption estimation with counted losses as a flat exchange rate, the same per absorbed photon for any number of passes \cite{Wildfeuer2026b}, which makes direct transmission the right baseline. A multi-pass scheme of one photon can therefore gain information only in the gap between the registered and unregistered bounds. \Cref{thm:main} assumes of the geometry only that the photon meets the object once; dropping that assumption leaves the identity behind it intact and shows where the gain can come from.

\begin{proposition}
\label{prop:manypass}
Let the photon meet the object $n$ times with arbitrary lossless optics between passes, so that the unabsorbed state is $\ket{\Psi_t}=U_nK_t\cdots U_1K_t\ket\psi$. Write $\ket{\Psi^{(k)}_t}=U_kK_t\ket{\Psi^{(k-1)}_t}$ with $\ket{\Psi^{(0)}_t}=\ket\psi$, let $c_k=\braket{a}{\Psi^{(k-1)}_t}$ be the amplitude on the object's path just before the $k$th pass, and let $A_k=U_nK_t\cdots U_{k+1}K_tU_k$ be the propagator from just after the object on that pass to the output, so $A_n=U_n$. Then for every output basis and every $t\in(0,1]$,
\begin{equation}
\Fs(t)=\frac{1}{t}\Big\|\sum_{k=1}^n c_k A_k\ket{a}\Big\|^2-4\hspace{-6pt}\sum_{m:P(m|t)>0}\frac{[\mathrm{Im}[\bar z_m\partial_t z_m]]^2}{P(m|t)},
\label{eq:manypass}
\end{equation}
and the object absorbs $(1-t)\sum_k|c_k|^2$ photons per trial.
\end{proposition}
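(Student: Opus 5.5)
The plan is to repeat the first step of the proof of \cref{thm:main} with the multi-pass amplitudes $z_m(t)=\braket{m}{\Psi_t}$. For each output with $P(m|t)>0$ we again write
\[
\frac{[\partial_tP(m|t)]^2}{P(m|t)}=4|\partial_tz_m|^2-\frac{4\,[\mathrm{Im}[\bar z_m\partial_tz_m]]^2}{|z_m|^2}.
\]
This identity is purely algebraic, since $\partial_tP=2\mathrm{Re}[\bar z_m\partial_tz_m]$ and $|\bar z_m\partial_t z_m|^2=|z_m|^2|\partial_tz_m|^2$. The subtracted term is therefore exactly the second term of \cref{eq:manypass}, and what remains is to evaluate $\sum_m4|\partial_tz_m|^2=4\|\partial_t\ket{\Psi_t}\|^2$ by completeness of the output basis.

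To compute $\partial_t\ket{\Psi_t}$ we use the product rule on $U_nK_t\cdots U_1K_t\ket\psi$. The $k$th term differentiates the $k$th factor $K_t$. Since $\partial_tK_t=\proj{a}/(2\sqrt t)$, and the state just before that factor is $\ket{\Psi^{(k-1)}_t}$, this term equals
\[
\frac{1}{2\sqrt t}\,U_nK_t\cdots U_{k+1}K_tU_k\ket{a}\braket{a}{\Psi^{(k-1)}_t}=\frac{c_k}{2\sqrt t}A_k\ket a .
\]
Summing gives $\partial_t\ket{\Psi_t}=\frac{1}{2\sqrt t}\sum_kc_kA_k\ket a$, so $4\|\partial_t\Psi_t\|^2=\frac1t\|\sum_kc_kA_k\ket a\|^2$, which is the first term.

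Outputs with $P(m|t)=0$ must be handled as in \cref{thm:main}, by taking the limit of their term. When $z_m=0$ the subtracted quantity $\mathrm{Im}[\bar z_m\partial_tz_m]$ vanishes, and the contribution of the output is $4|\partial_tz_m|^2$. That contribution is already counted in the completeness sum, so restricting the subtracted sum to occupied outputs is consistent. This limit is the step I expect to need the most care. In the single-pass case $z_m$ was affine in $\sqrt t$, whereas here it is a polynomial of degree $n$ in $\sqrt t$. An isolated zero of $z_m$ at the given $t$ therefore has to be treated by continuity: we take the limit of $(\partial_tP)^2/P$ along the expansion of $z_m$ in $\sqrt t$ and check that the limit equals $4|\partial_tz_m|^2$ when the zero is simple. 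We state the convention accordingly; a double zero contributes $0=4|\partial_tz_m|^2$ as well.

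For the absorption count, each pass removes from the norm the amount
\[
\|\ket{\Psi^{(k-1)}_t}\|^2-\|K_t\ket{\Psi^{(k-1)}_t}\|^2=(1-t)|c_k|^2 .
\]
The unitaries $U_k$ preserve the norm, so summing over $k$ telescopes to a total absorption of $(1-t)\sum_k|c_k|^2$ per trial.
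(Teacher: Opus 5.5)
Your proposal is correct and follows the paper's proof step for step: the modulus--phase identity summed by completeness, the product rule with $\partial_tK_t=\proj{a}/(2\sqrt t)$ giving $\partial_t\ket{\Psi_t}=\sum_kc_kA_k\ket a/(2\sqrt t)$, and the pass-by-pass norm loss $(1-t)|c_k|^2$. You also check that a zero of $z_m$, now a polynomial in $\sqrt t$ rather than affine, still contributes the finite limit $4|\partial_tz_m|^2$. This fills in a point the paper only asserts by appeal to the proof of \cref{thm:main}.
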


\begin{proof}
The first step of the proof of \cref{thm:main} uses only completeness of the output basis, so for any family of amplitudes $\Fs=4\sum_m|\partial_t z_m|^2-4\sum_m[\mathrm{Im}[\bar z_m\partial_t z_m]]^2/P(m|t)$, the first sum being $4\|\partial_t\Psi_t\|^2$ and an output with $z_m=0$ contributing $4|\partial_tz_m|^2$ to it and nothing to the second. Differentiating the product and using $\partial_tK_t=\proj a/(2\sqrt{t})$ gives $\partial_t\ket{\Psi_t}=\sum_kc_kA_k\ket a/(2\sqrt{t})$. The absorption follows from $\|K_t\ket\phi\|^2=\|\ket\phi\|^2-(1-t)|\braket{a}{\phi}|^2$ applied pass by pass.
\end{proof}

\Cref{thm:main} is \cref{eq:manypass}'s $n=1$ case, where $A_1$ is unitary and the norm is $P(a)/t$ whatever the optics. For more than one pass the norm is a coherent sum, and it departs from $\sum_k|c_k|^2\|A_k\ket a\|^2/t$, the value the passes would give independently, only through the overlaps $\bra{a}A_j^\dagger A_k\ket{a}$ with $j\neq k$, which lower it as readily as raise it. Those overlaps are the only route to a gain: each $\|A_k\ket a\|\le1$, so without them $\Fs\le\sum_k|c_k|^2/t$, which against the absorption $(1-t)\sum_k|c_k|^2$ is at most the direct-transmission value once more. The chain below is that interference arranged deliberately.

Thomas \textit{et al} found numerically that Refs.~\cite{Kwiat1995,Kwiat1999}'s Zeno interferometer, read out with a Poisson source and unregistered loss, needs fewer absorbed particles than direct transmission for $t$ above about one half; for a suitable number of cycles it needs more than an order of magnitude fewer for $t\gtrsim0.95$ (their Fig.~9, with $10$, $100$ and $500$ cycles) \cite{Thomas2014}. The multi-pass dose advantage has been measured in a self-imaging cavity, in both polarisation and transmission contrast, where losses in the sample and the cavity (in the polarisation data chiefly the cavity's $19\%$ per pass) stopped the gain growing after three to seven passes \cite{Juffmann2016}. The model used below has $n$ cycles, each rotating the photon by $\pi/(2n)$ from the reference arm into the sample arm, whose amplitude is then multiplied by $\sqrt t$; the two arms are detected after the last cycle, and the loss again goes unregistered. Its crossover, the transmittance above which it beats direct transmission, rises with the number of cycles (Appendix~\ref{app:zeno}). The advantage comes from many coherent passes converting the loss into a change of output port (Appendix~\ref{app:zeno} gives the numbers); \cref{thm:main} is the statement that a single pass cannot do this, whichever interferometer and readout are used.

The chain's numbers (Appendix~\ref{app:zeno}) are close to the best the model allows. In the limit $t\to1$ at fixed $\nu=n(1-\sqrt t)$ the chain becomes a damped two-level system, and the fraction of the registered bound that the two output ports deliver becomes a function $G(\nu)$ of that one variable (\cref{fig:zeno,app:zeno}). Its maximum is $G^*=0.932$, and no readout of the two ports does better; registering the absorption would add only $0.003$ of the bound. The missing $0.065$ is the record of which cycle absorbed the photon, which stays with the object and which no readout of the photon recovers \cite{Wildfeuer2026}. Light passed $\ell$ times straight through the object with its loss unregistered reaches at most $0.648$ of the registered bound in the same limit (Appendix~\ref{app:zeno}), the $65\%$ Wildfeuer finds at $t=0.99$ and $\ell=159$ \cite{Wildfeuer2026}.

The chain is not the best arrangement: an all-pass ring resonator with the object inside it brings $G$ arbitrarily close to one on resonance. It is another instance of \cref{prop:manypass}, the photon meeting the object once per round trip. With self-coupling amplitude $r_{\mathrm c}$ at the coupler, round-trip amplitude transmission $\sqrt{t}$ set by the object, round-trip phase $\theta_{\mathrm{rt}}$ and a wavepacket long compared with the ring's lifetime, so that the ring is in its steady state, the output amplitude is $g_{\mathrm{ring}}=(r_{\mathrm c}-\sqrt{t} e^{i\theta_{\mathrm{rt}}})/(1-r_{\mathrm c}\sqrt{t} e^{i\theta_{\mathrm{rt}}})$ \cite{Bogaerts2012}. The buildup, the circulating intensity per photon entering as it reaches the object, is $\mathcal B=(1-r_{\mathrm c}^2)/|1-r_{\mathrm c}\sqrt{t} e^{i\theta_{\mathrm{rt}}}|^2$ (Ref.~\cite{Belsley2022} takes it after the round trip, where it is $t\mathcal B$), and the object absorbs $(1-t)\mathcal B$ per photon entering. On resonance the amplitudes are real, the second term of \cref{eq:manypass} vanishes, and the first is $\mathcal B^2/t$: the coherent sum over passes is the buildup squared. The output port alone therefore gives $\mathcal B/[t(1-t)]$ per absorbed photon, $\mathcal B$ times the direct-transmission value, and the fraction of the registered bound it delivers is $G=(1-t)\mathcal B$, the absorbed fraction itself; the absorbed outcome supplies the rest, so that the two together attain the bound at every coupling. Since $\mathcal B\le1/(1-t)$, with equality at critical coupling ($r_{\mathrm c}=\sqrt{t}$), $G\le1$, and $G=1$ at critical coupling on resonance: the output port is then dark, the whole photon is absorbed, the absorption is stationary in $t$, and the dark port gives $1/[t(1-t)^2]$ per absorbed photon by itself, through the finite limit of \cref{eq:Fsdef}. Because $r_{\mathrm c}=\sqrt{t}$ depends on the unknown $t$, reaching it needs a prior estimate or an adaptive coupler, as the optimal readout of a dispersive object does (\cref{sec:dispersive}), and the estimate must also say on which side of critical coupling $t$ lies, since the output intensity vanishes quadratically there and values of $t$ on either side give the same counts. This is Ref.~\cite{Belsley2022}'s setting, whose authors note that the output is dark at critical coupling and that every photon is then absorbed, but count per photon entering rather than per photon absorbed. Off resonance the amplitudes are complex and $G<1$, and at critical coupling discontinuously so: a detuning fills the dark port with an amplitude in quadrature with its response to $t$, and the second term of \cref{eq:manypass} then removes nearly all of the first, so that for small detunings $G\simeq(1+t)^2\theta_{\mathrm{rt}}^2/[4(1-t)^2]$, about the square of the detuning in half-widths of the resonance when $t$ is near one (Appendix~\ref{app:zeno}). The shortfall belongs to the readout, counting photons at the output port, not to the output field, whose information, read against a phase reference, changes with the detuning only through $\mathcal B$ \cite{Belsley2022}.

A wavepacket of $N$ round trips through a critically coupled ring on resonance gives, at $t=0.9$ and for $N=40$, $80$ and $1000$, $G=0.91$, $0.96$ and $0.998$ if the light left in the ring at the end is detected as a further output, and $0.86$, $0.93$ and $0.995$ if the ring is left to empty through its coupler, short of one only by the transients at the ends of the wavepacket. For every wavepacket of finite length $G<1$, as for every arrangement with finitely many passes: $G=1$ needs both the absorbed outcome and the record of which pass absorbed the photon, which the object keeps \cite{Wildfeuer2026}, to hold no information, and the record holds none only if every pass's share of the absorption is stationary in $t$, and then, since the first pass's amplitude does not depend on $t$, the absorption itself is not stationary (Appendix~\ref{app:zeno}). Only the steady state (reached in the limit of a long wavepacket, where the absorption is uniform in time), escapes this. These values count each output time bin as a separate outcome. The ring's information is held in a port that is dark at critical coupling, so anything that lights that port lowers $G$. Counted over the whole wavepacket, with the ring left to empty, the bright transients at its ends do so, and the three wavepackets above then give $G=0.067$, $0.034$ and $0.0024$, below direct transmission's $1-t=0.1$. Moving the coupling off critical, by $0.10$, $0.08$ and $0.04$ below $\sqrt{t}$, restores $G=0.56$, $0.65$ and $0.86$, since the steady state then lights the through port as well. A detector background, a source broader than a small fraction of the ring's linewidth, or loss in the ring other than the object's each lower it too (Appendix~\ref{app:zeno}).

\begin{figure}[t]
\centering
\includegraphics[width=\columnwidth]{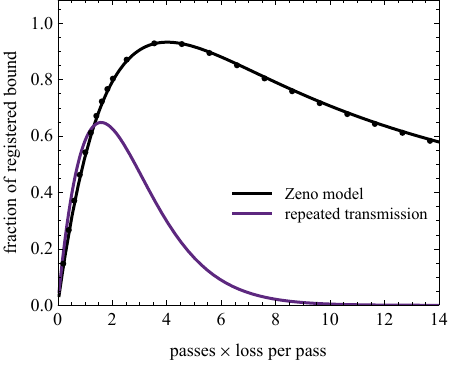}
\caption{Many passes with unregistered loss: Fisher information per absorbed photon delivered by the detected photons, as a fraction of the registered bound $1/[t(1-t)^2]$, against the number of passes times the intensity loss per pass, in the limit $t\to1$. Black: \cref{sec:multipass}'s Zeno model, Appendix~\ref{app:zeno}'s $G(\nu)$ plotted against $2\nu=2n(1-\sqrt t)$, which equals $n(1-t)$ in this limit and so shares the abscissa with the purple curve; its maximum $0.932$ at $2\nu=4.03$ is Appendix~\ref{app:zeno}'s $\nu^*=2.013$. Points: the $n$-cycle model at $t=0.95$, against $2n(1-\sqrt t)$. Purple: light passed $\ell$ times straight through the object, $y^2/(e^y-1)$ with $y=\ell(1-t)$, with maximum $0.648$ at $y=1.59$ \cite{Wildfeuer2026,Wildfeuer2026b}.}
\label{fig:zeno}
\end{figure}

\section{Discrimination: single-shot gain that does not accumulate}
\label{sec:discrimination}

We now consider the task interaction-free measurement was introduced for: determining whether an object is present or absent. \Cref{sec:singleshot} follows Ref.~\cite{Hance2024}, with the photon heralded, meaning an absorption is itself an observed outcome; \cref{sec:manyshot} gives the error exponent for both that case and \cref{sec:fisher}'s unregistered case; and \cref{sec:optimalreadout} identifies the optimal single-pass readout. (Appendices~\ref{app:finiteM} discusses how long the single-shot gain lasts, and \ref{app:sequential} considers sequential tests for the unregistered case.) For a weak absorber, the conclusion about negativity is the same for both the registered and unregistered case; however, their conclusions for null ports differ, since they gain information at leading order in $1-t$ only in the unregistered case.

\subsection{One photon}
\label{sec:singleshot}

Anisimov \textit{et al} posed interaction-free measurement as a symmetric test between a blocked and a clear path, with the absorbed outcome included in the Chernoff exponent, and measured the scheme by the ratio of that exponent to the rate at which the object absorbs \cite{Anisimov2010}: \cref{fig:disc}(b)'s ratio $C(t)/[-\ln S(t)]$. Zhou and Yung formulated interaction-free measurement as channel discrimination for a semitransparent object: at a finite number of Zeno cycles they minimised the loss probability of a single photon and its error probability separately over the input state, and compared the loss of the zero-error inputs with the least attainable loss \cite{Zhou2017}. Ref.~\cite{Hance2024} quantified the benefits of interaction-free measurement, for an opaque object, by the average error of a single-shot guess between object present and object absent with equal priors. That error is $P_{\mathrm{err}}=(1-\Delta_a$)/2, where $\Delta_a$ is the statistical distance between the outcome distributions, the absorbed outcome included. For a partially transmitting object,
\begin{figure*}[t]
\centering
\panel{a}{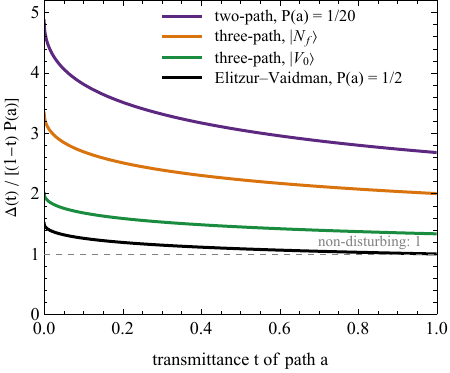}\hfill
\panel{b}{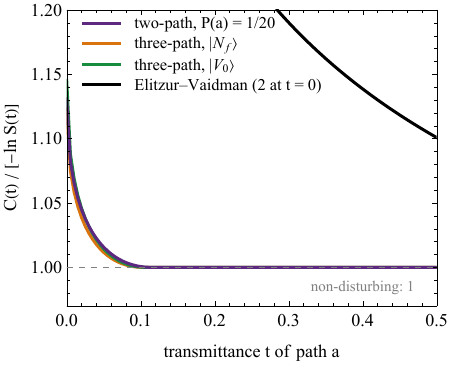}
\caption{Discrimination between object present (transmittance $t$) and absent, per absorbed photon, for four interferometers: two paths with $P(a)=1/20$ and balanced readout (purple), the three-path interferometer with inputs $\ket{N_f}$ (orange) \cite{Hofmann2023,Sagawa2026} and $\ket{V_0}$ (green) \cite{Sagawa2026}, and the balanced Elitzur--Vaidman interferometer (black). (a) Single-shot statistical distance $R(t)$, \cref{eq:R}, over $t\in[0,1]$; the $t\to1$ limits are $1+\Nneg/P(a)$, with $\Nneg$ the negativity of the KD distribution. (b) Error exponent of a repeated test, the Chernoff information $C(t)$, relative to its value $-\ln S(t)$, where $S(t)$ is the survival probability, for any non-disturbing definite-path scheme with the same absorption, over $t\in[0,0.5]$ and with the ratio axis cut at $1.2$. Panel (a)'s negativity-driven gain is absent from panel (b) for $t$ above $t^*\simeq0.10$--$0.11$, where $C(t)$ and $-\ln S(t)$ coincide. The Elitzur--Vaidman interferometer has $C(t)=-2\ln[(1+\sqrt{t})/2]$ from its null port; the ratio is $2$ at $t=0$, falls below $1.2$ (and so into the panel) near $t=0.3$, and reaches $1$ only at $t=1$, beyond the range shown.}
\label{fig:disc}
\end{figure*}
\begin{equation}
\Delta_a(t)=\frac12\Big[(1-t)P(a)+\sum_m\big|P(m|t)-P(m)\big|\Big].
\label{eq:tv}
\end{equation}
Let us call a model \emph{non-disturbing definite-path}, or non-disturbing for short, if the photon has a definite path and the object's only effect on the photons that survive is to delete those that were on path $a$, so that no output brightens: $\varrho(a,m)\ge\tfrac12(1-\sqrt{t})|\braket{m}{a}|^2P(a)$ at every $m$, which is the negation of \cref{eq:brightcond} and becomes $\varrho(a,m)\ge0$ as $t\to1$. In any such model, $\sum_m|P(m|t)-P(m)|=(1-t)P(a)$ and $\Delta_a=(1-t)P(a)$: the object is detected only through the photons it absorbs. (This is not \cref{sec:setting}'s noncontextuality, which is generalised noncontextuality in the sense of Refs.~\cite{Pusey2014,Lostaglio2020} and constrains measurement statistics rather than forbidding disturbance. The Elitzur--Vaidman null port violates the non-disturbance condition at every $t<1$. However, for an opaque object, its statistics have been reproduced in models where blocking a path also randomises a hidden variable: in a balanced interferometer by Ref.~\cite{Catani2023}'s toy field theory, and with unbalanced beam splitters too in Ref.~\cite{HanceKrnic2026}. Both models are Kochen--Specker noncontextual \cite{HanceKrnic2026}.) We therefore take
\begin{equation}
R(t)=\frac{\Delta_a(t)}{(1-t)P(a)}
\label{eq:R}
\end{equation}
as the single-shot distance per absorbed photon; $R=1$ for every non-disturbing definite-path model. At $t=0$, $R-1$ is Ref.~\cite{Hance2024}'s counterfactual gain $\Delta_a-P(a)$ divided by $P(a)$. In the weak-absorber limit, the linear response, \cref{eq:linres} gives $|P(m|t)-P(m)|\to(1-t)|\varrho(a,m)|$, and with $\sum_m|\varrho(a,m)|=P(a)+2\Nneg$,
\begin{equation}
\lim_{t\to1}R(t)=1+\frac{\Nneg}{P(a)}, \qquad \Nneg=\sum_{m:\,\varrho(a,m)<0}|\varrho(a,m)|,
\label{eq:Rweak}
\end{equation}
where $\Nneg$ is the negativity of the KD distribution over (path $a$, output). The single-shot gain for a weak absorber equals the negativity per unit absorption, which is unbounded. For a two-path interferometer with input $\sqrt{1-p}\ket{b}+\sqrt p\ket{a}$ and balanced readout, $\Nneg=\sqrt{p}(\sqrt{1-p}-\sqrt p)/2$ for $p\le\tfrac12$, so $R(1^-)=\tfrac12+\tfrac12\sqrt{(1-p)/p}$, which grows without limit as the coupling $p=P(a)$ to the object is weakened. \Cref{fig:disc}(a) shows $R(t)$ for this family at $p=1/20$, for two of the states of the three-path interferometer studied experimentally in Ref.~\cite{Sagawa2026}, and for the balanced Elitzur--Vaidman interferometer. In that device the object sits on the internal path $\ket f=(\ket1+\ket2-\ket3)/\sqrt3$ \cite[Eq.~(1)]{Hofmann2025} and the three outputs $\ket1,\ket2,\ket3$ are measured, so $\varrho(f,m)$ and $P(f)$ replace $\varrho(a,m)$ and $P(a)$ below. The three-path state $\ket{N_f}=(\ket1+\ket2+\ket3)/\sqrt3$ has $\varrho(f,m)=(1,1,-1)/9$ with $P(f)=1/9$, so $R$ falls from $10/3$ (at $t=0$), to $2$ as $t\to1$; the state $\ket{V_0}=(2\ket1+2\ket2+\ket3)/3$ has $\varrho(f,m)=(2,2,-1)/9$ with $P(f)=1/3$, giving $2$ and $4/3$. The Elitzur--Vaidman interferometer has no negative term, $\varrho=(1/2,0)$, and its gain at $t=0$, $R=3/2$, comes entirely from the null port: it disappears as $t\to1$ because the null-port probability is second order in $(1-t)$.

How large can $R$ be at a given $P(a)$? The two hypotheses correspond to two quantum states of the photon and the absorber's record, $\proj\psi$ and $K_t\proj\psi K_t\oplus[1-S(t)]\proj{\mathrm{abs}}$, and the statistical distance of any readout is at most half the trace norm of their difference \cite[Ch.~IV]{Helstrom1976}. The difference has the eigenvalue $-(1-S)$ on the absorbed outcome and, on the span of $\ket\psi$ and $\ket a$, two eigenvalues of opposite sign with sum $1-S$ and product $-(S-\chi^2)$, where $\chi=\bra\psi K_t\ket\psi=1-(1-\sqrt{t})P(a)$. Therefore, for every readout,
\begin{equation}
\Delta_a(t)\le\frac12\Big[(1-S)+\sqrt{(1+S)^2-4\chi^2}\Big],
\label{eq:helstrom}
\end{equation}
with equality for the readout that contains the two eigenvectors, which are real combinations of $\ket\psi$ and $\ket a$ and so define a two-path interferometer. \Cref{eq:helstrom} is the minimum-error probability of Zhou and Yung's channel-discrimination formulation \cite[Eq.~(41)]{Zhou2017}, written for the statistical distance at equal priors and with the overlap of the surviving and object-free states evaluated for a single pass through an arbitrary interferometer. This depends on the interferometer only through $P(a)$, and reaches one, an error-free test, only when $\chi=0$, which is Ref.~\cite[Theorem~3]{Zhou2017}, and for a single pass needs $t=0$ and $P(a)=1$. At $t=0$ it reduces to the bound $\Delta_a-P(a)\le\tfrac12[\sqrt{(4-3P(a))P(a)}-P(a)]$ on the counterfactual gain of an opaque object, Ref.~\cite[Eq.~(14)]{Hance2024}, of which it is the partially transmitting generalisation. There it gives $R\le[P(a)+\sqrt{4P(a)-3P(a)^2}]/[2P(a)]$: $3.37$ at $P(a)=1/9$, against $10/3$ for $\ket{N_f}$, and the golden ratio $(1+\sqrt5)/2$ at $P(a)=1/2$, against $3/2$ for the balanced Elitzur--Vaidman readout. For a weak absorber, $1-S\to2(1-\sqrt{t})P(a)$ and $(1+S)^2-4\chi^2\to4(1-\sqrt{t})^2P(a)$, so
\begin{equation}
\lim_{t\to1}R(t)\le\frac12\Big[1+\frac1{\sqrt{P(a)}}\Big],
\end{equation}
slightly above the value $\tfrac12+\tfrac12\sqrt{(1-p)/p}$ of the balanced readout. With \cref{eq:Rweak} this bounds the negativity itself, for a pure input: a readout cannot have more negativity than a Helstrom measurement has distinguishing power.

\begin{lemma}
\label{lem:negmax}
For every state in every dimension, every path projector $\hat{\Pi}(a)$ and every orthonormal output basis,
\begin{equation}
\Nneg=\sum_{m:\varrho(a,m)<0}|\varrho(a,m)|\le\tfrac12\sqrt{P(a)}\big[1-\sqrt{P(a)}\big]\le\tfrac18 ,
\label{eq:negmax}
\end{equation}
the overall maximum $1/8$ being reached at $P(a)=1/4$. For $P(a)>0$, equality requires $\hat{\rho}$ to be pure.
\end{lemma}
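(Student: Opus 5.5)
The plan is a direct operator argument rather than going through the Helstrom bound \cref{eq:helstrom} and \cref{eq:Rweak}, since that route only covers pure inputs and gives the bound for a weak absorber. The idea is to rewrite the negativity as a single expectation value and bound it with Cauchy--Schwarz.

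\emph{Step 1: reduce to a projector.} Let $M_-=\{m:\varrho(a,m)<0\}$ and let $Q=\sum_{m\in M_-}\proj{m}$ be the projector onto those outputs. Then
\[
\Nneg=-\sum_{m\in M_-}\mathrm{Re}\big[\braket{m}{a}\bra{a}\hat\rho\ket{m}\big]=-\mathrm{Re}\,\bra{a}\hat\rho\, Q\ket{a}.
\]
From here on I drop the dependence on the basis. I will show that $-\mathrm{Re}\,\bra{a}\hat\rho Q\ket{a}\le f(P(a))$ for every projector $Q$, where $f(p)=\tfrac12(\sqrt p-p)$.

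\emph{Step 2: pure states.} Take $\hat\rho=\proj\psi$ and write $Q=\tfrac12(\mathds{1}+R)$, where $R=2Q-\mathds{1}$ is a Hermitian unitary. This gives
\[
\mathrm{Re}\,\braket{a}{\psi}\bra{\psi}Q\ket{a}=\tfrac12\big[P(a)+\mathrm{Re}\,\braket{a}{\psi}\bra{\psi}R\ket{a}\big].
\]
Since $R$ is unitary, $\|R\ket a\|=1$, so Cauchy--Schwarz gives $|\braket{a}{\psi}\bra{\psi}R\ket{a}|\le\sqrt{P(a)}$. Hence $\Nneg\le\tfrac12[\sqrt{P(a)}-P(a)]$. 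Equality holds if and only if $\ket\psi\propto R\ket a$ with the phase chosen so that the product equals $-\sqrt{P(a)}$. The final bound $\tfrac18$ comes from maximising $f$ over $s=\sqrt p\in[0,1]$: the function $\tfrac12 s(1-s)$ peaks at $s=\tfrac12$, that is, at $P(a)=\tfrac14$.

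\emph{Step 3: mixed states.} Decompose $\hat\rho=\sum_i p_i\proj{\psi_i}$ and keep the same $Q$ for every component. The quantity $-\mathrm{Re}\,\bra a\hat\rho Q\ket a$ is linear in $\hat\rho$, and so is $P(a)=\sum_ip_iP_i(a)$. Step 2 together with Jensen's inequality for the concave function $f$ then gives $\Nneg\le\sum_ip_if(P_i(a))\le f(P(a))$.

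\emph{Step 4: equality forces a pure state.} This is the step needing most care. Suppose $P(a)>0$ and equality holds. Strict concavity of $f$ (note $f''<0$ on $(0,1]$) forces all $P_i(a)$ to be equal and positive. Each component must also attain equality in Step 2 with the same $R$, so every $\ket{\psi_i}$ is proportional to the single vector $R\ket a$. The projectors $\proj{\psi_i}$ are therefore all identical, so $\hat\rho$ is pure.

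The main obstacle is making sure that Step 2's equality case is pinned down uniquely up to a global phase, so that it forces $\ket{\psi_i}\propto R\ket a$. It must also cover the edge case in which some components have $P_i(a)=0$. Strict concavity at $p=0$, where $f(0)=0$ but $f$ is not linear on $[0,P(a)]$, excludes such components.
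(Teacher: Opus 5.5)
Your proof is correct, but it takes a different route from the paper's.

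\textbf{The paper's route.} The paper never singles out the negative outputs. It bounds the whole $\ell_1$ norm: first $|\KD(a,m)|\le|\braket{m}{a}|\sqrt{P(a)P(m)}$ by Cauchy--Schwarz in the inner product $\bra{x}\hat\rho\ket{y}$, then $\sum_m|\KD(a,m)|\le\sqrt{P(a)}$ by a second Cauchy--Schwarz over the outputs. It converts this into a bound on negativity through the first sum rule, $\sum_m|\varrho(a,m)|=P(a)+2\Nneg$. Mixed states need no separate treatment there. Purity follows from the three equality conditions: $\hat\rho^{1/2}\ket{m}\propto\hat\rho^{1/2}\ket{a}$, $P(m)=|\braket{m}{a}|^2$, and real KD terms.

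\textbf{Your route.} You lump the negative outputs into one projector $Q$. This turns the basis-summed negativity into a single KD-type term $-\mathrm{Re}\bra{a}\hat\rho Q\ket{a}$, and the reflection $R=2Q-\mathds{1}$ then gives the bound with one Cauchy--Schwarz.

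\textbf{What each buys.} Your version holds for every projector $Q$. It therefore exhibits the lemma as the known single-term bound extended to output projectors of arbitrary rank, and it names the extremal state, $\ket{\psi}\propto R\ket{a}$. The paper's version yields the stronger intermediate $\sum_m|\KD(a,m)|\le\sqrt{P(a)}$. That bound controls nonreality as well as negativity, and it is the per-path term of Budiyono \textit{et al.}'s Proposition~2, which the paper connects to.

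\textbf{Two small points.} Your Jensen step is valid but avoidable. Cauchy--Schwarz in the $\hat\rho$ inner product gives $|\bra{a}\hat\rho R\ket{a}|^2\le P(a)\bra{a}R\hat\rho R\ket{a}\le P(a)$ for any $\hat\rho$. For $P(a)>0$, equality forces $\bra{a}R\hat\rho R\ket{a}=1$, i.e.\ $\hat\rho=R\proj{a}R$, which gives purity in one line.

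Also, in Step~2 the product $\braket{a}{\psi}\bra{\psi}R\ket{a}$ does not change when $\ket\psi$ is rephased. So the extra condition for equality is not a choice of phase but $\bra{a}R\ket{a}=-\sqrt{P(a)}$, i.e.\ $\bra{a}Q\ket{a}\le\tfrac12$. This affects only sufficiency, not the necessity of purity that the lemma asserts.
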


\begin{proof}
$|\KD(a,m)|=|\braket{m}{a}|\,|\bra{a}\hat\rho\ket{m}|\le|\braket{m}{a}|\sqrt{P(a)P(m)}$ by Cauchy--Schwarz in the inner product $\langle x,y\rangle=\bra{x}\hat\rho\ket{y}$, so a second Cauchy--Schwarz with $\sum_m|\braket{m}{a}|^2=1$ and $\sum_mP(m)=1$ gives $\sum_m|\varrho(a,m)|\le\sum_m|\KD(a,m)|\le\sqrt{P(a)}$, while the first sum rule in \cref{eq:sumrules} gives $\sum_m\varrho(a,m)=P(a)$, so $\sum_m|\varrho(a,m)|=P(a)+2\Nneg$ and $2\Nneg\le\sqrt{P(a)}-P(a)$. Equality requires $\hat\rho^{1/2}\ket m\propto\hat\rho^{1/2}\ket a$ at every output with $\braket{m}{a}\neq0$, from the first step, $P(m)=|\braket{m}{a}|^2$ at every $m$, from the second, and every $\KD(a,m)$ real, from $|\mathrm{Re}[z]|\le|z|$. For $P(a)>0$ the first two force $\hat\rho$ to be pure: outputs with $\braket{m}{a}=0$ then have $P(m)=0$, so $\hat\rho\ket m=0$ there, and the range of $\hat\rho^{1/2}$ is spanned by $\hat\rho^{1/2}\ket a\neq0$. For a mixed state the inequality is therefore strict, though the bound remains the supremum.
\end{proof}

\Cref{lem:negmax} bounds the sum of all negative terms over a complete output basis. The same bound on a \emph{single} KD term is already known \cite{Zhang2025,Hofmann2025}, with its minimum over $P(a)$, $-1/8$, in Ref.~\cite{Fernandes2024}, and the first inequality of the proof, $\sum_m|\KD(a,m)|\le\sqrt{P(a)}$, is the term in $a$ of the chain by which Budiyono \textit{et al} prove their Proposition~2 \cite{Budiyono2024}, so \cref{eq:negmax} adds the passage from that $\ell_1$ bound to the negativity. Summed over an orthonormal path basis, \cref{eq:negmax} bounds the total negativity of the KD distribution of path and output by the known value $(\sqrt d-1)/2$ (Appendix~\ref{app:efficiency}). For two paths with $\ket\psi=\cos\vartheta\ket b+\sin\vartheta\ket a$ and outputs $\cos\theta\ket b+\sin\theta\ket a$ and $\sin\theta\ket b-\cos\theta\ket a$, the negative term is $-\tfrac12\sin\vartheta[\sin(2\theta-\vartheta)-\sin\vartheta]$, so \cref{eq:negmax} is saturated at $2\theta=\vartheta+\pi/2$. The three-path interferometer's $\ket{N_f}$ state, with $\Nneg=P(f)=1/9$, saturates \cref{eq:negmax}: no readout of any interferometer with $P(a)=1/9$ has more negativity or a larger single-shot gain for a weak absorber. The state $\ket{V_0}$ has $\Nneg=1/9$ against a bound of $0.122$ at $P(f)=1/3$. At $t=0$ the ranking inverts: $\ket{V_0}$ saturates \cref{eq:helstrom}, with $\Delta_a-P(f)=1/3$ against a bound of $1/3$, while $\ket{N_f}$ gives $7/27$ against $0.264$.

For a weak absorber, the single-shot distance and \cref{sec:fisher}'s Fisher information are built from the same responses $\varrho(a,m)$: \cref{eq:Rweak} sums their moduli, and the first form of \cref{eq:Fs1} sums their squares divided by $P(m)$. Dropping the null-port term of that form and applying the Cauchy--Schwarz inequality over the occupied outputs, whose probabilities sum to one, links the two,
\begin{equation}
\Fs(1)\ge\sum_{m:P(m)>0}\frac{\varrho(a,m)^2}{P(m)}\ge\big[P(a)+2\Nneg\big]^2,
\label{eq:floor}
\end{equation}
since an unoccupied output has $\KD(a,m)=0$, so that the sum of $|\varrho(a,m)|$ over the occupied outputs is $P(a)+2\Nneg$ by the first sum rule. Neither step needs a pure input, and $\Fs(1)\le P(a)$ holds for every input by \cref{eq:mixed}, so \cref{eq:floor} is a third route to \cref{eq:negmax}, beside the $\ell_1$ bound and the statistical distance: negativity is bounded because more of it would force more Fisher information about $t$ out of the unabsorbed photons than direct transmission of the fraction $P(a)$ of the light provides. For $P(a)>0$, \cref{eq:floor}'s equality holds iff $|\varrho(a,m)|/P(m)$ is the same at every occupied output, and there is no null port.

\subsection{The error exponent of a repeated test}
\label{sec:manyshot}

The single-shot distance is the right measure when one photon must decide whether a blocker is present or absent, but it is not additive, meaning we must consider how the error falls when the trial is repeated. For $M$ independent trials, the optimal error probability decays as $e^{-MC}$, where the Chernoff information $C=-\ln\min_{s\in[0,1]}\sum_x P_0(x)^sP_1(x)^{1-s}$ is computed over all outcomes \cite[Sec.~11.9]{Cover2006}. With the object absent the absorbed outcome never occurs, so it contributes nothing to the sum for $s>0$, and outputs that are dark in the object's absence drop out in the same way. Writing $f(s)=\sum_{m:P(m)>0}P(m)^sP(m|t)^{1-s}$, we get $f(1^-)=1$ when the object extinguishes no occupied output, and $f(0^+)=S(t)-\sum_{m:P(m)=0}P(m|t)$, so $C\ge-\ln f(0^+)$. In a non-disturbing definite-path model, $P(m|t)\le P(m)$ for every output, $f$ is increasing, and $C=-\ln S(t)$, meaning the exponent per absorbed photon is $-\ln[1-(1-t)P(a)]/[(1-t)P(a)]$, the same as for direct transmission of a fraction $P(a)$ of the light. Quantum mechanically, $f$ is convex, and $f'(0^+)$ is positive for a weak absorber (Appendix~\ref{app:chernoff}), so $f$ is again increasing and
\begin{equation}
C(t)=-\ln\Big[S(t)-\sum_{m:P(m)=0}P(m|t)\Big],
\label{eq:chernoff}
\end{equation}
which differs from the non-disturbing value only through the null ports, whose contribution is $O((1-t)^2)$. The inequality $C\ge-\ln f(0^+)$ is Ref.~\cite{Wildfeuer2026b}'s lemma for null experiments, with absorption and the null ports as the forbidden outcomes. The equality case stated there as necessary and sufficient, proportionality of the two distributions off the forbidden set, is sufficient but not necessary: the necessary and sufficient condition is $f'(0^+)\ge0$, which \cref{fig:disc}'s three interferometers with negative KD terms meet above the thresholds $t^*$ of panel (b) without proportionality. \Cref{fig:disc}(b) shows the ratio $C(t)/[-\ln S(t)]$ for the four examples, computed numerically (Appendix~\ref{app:chernoff}). Negativity plays no role, meaning, for the weak absorber, the per-absorbed-photon exponent of every scheme, quantum or classical, tends to one. This is \cref{thm:main}'s discrimination counterpart. The single-shot distance is first order in the changes $(1-t)\varrho(a,m)$ and rewards their absolute values, so it counts negative terms twice. The exponent of a repeated test, once the object is weak enough that $f'(0^+)\ge0$, is the absorption probability itself. To second order in $1-t$ the KD terms enter $f'(0^+)$ only through $\sum_m\varrho(a,m)^2/P(m)$, which the second sum rule bounds by $P(a)$, so Appendix~\ref{app:chernoff}'s expansion gives $f'(0^+)\ge\tfrac12(1-t^2)P(a)>0$ for every readout, null ports only adding to it. Negativity therefore cannot switch the gain on perturbatively, and does so only where that expansion fails. The unregistered case is the same. Without null ports the Chernoff information between the Poisson click records is $\tfrac18(1-t)^2\sum_mP(m)(\mathrm{Re}[\wv{m}])^2+O((1-t)^3)$ per photon sent, the same sum again, so direct transmission attains its bound; a null port adds its Elitzur--Vaidman probability at the same order, and the balanced Elitzur--Vaidman interferometer then exceeds direct transmission by the factor $9/8$ per absorbed photon. This does not contradict \cref{cor:absorbed}, which bounds the Fisher information: an output that is empty when the object is absent is worth up to twice the $\tfrac18(1-t)^2F$ that an outcome with Fisher information $F$ is worth. A factor of two is the most any readout can gain in this case, approached when $\ket{\psi}$ is one of the outputs and $P(a)$ is small (Appendix~\ref{app:chernoff}).

For a strongly absorbing object, the expansion fails and negativity does raise the exponent. For $t=0$, the two three-path states and the weakly coupled two-path interferometer exceed the non-disturbing exponent by 12\%, 15\% and 13\%, whereas their single-shot gains in panel (a) are 233\%, 100\% and 386\%. The gain vanishes identically once $f'(0^+)\ge0$, which happens at $t^*=0.102$, $0.106$ and $0.114$ respectively. These thresholds belong to the three interferometers and not to single-pass schemes in general: by \cref{eq:brightcond} an output with small $P(m)$ brightens under a stronger absorber, and it lowers $f'(0^+)$ by about $B_m\ln[B_m/P(m)]$, where $B_m=(1-\sqrt{t})^2|\braket{m}{a}|^2P(a)$ is its Elitzur--Vaidman probability. The threshold therefore rises towards one as an output approaches a null port, but only logarithmically. For the two-path interferometer with $P(a)=1/20$, read out at angle $\delta\theta$ from the one at which a port is dark and on the side where that port's KD term is negative (so that it brightens at every $t$), $t^*=0.14$, $0.28$ and $0.56$ at $\delta\theta=0.3$, $0.03$ and $0.001$, with $1-\sqrt{t^*}\simeq1/\{[1-P(a)]\ln[(1-\sqrt{t^*})\sqrt{P(a)[1-P(a)]}/\delta\theta]\}$ as $\delta\theta\to0$, and $t^*=0.9$ would need $\delta\theta=1.5\times10^{-11}$, a dark-port probability of $2\times10^{-22}$ (Appendix~\ref{app:chernoff}). At any extinction available in practice, the exponent of a weak absorber is the same as in the classical case unless a port is dark by design. The Elitzur--Vaidman interferometer, which has a null port but no negativity, has $f(s)=[(1+\sqrt t)^2/4]^{1-s}$, so $C(t)=-2\ln[(1+\sqrt t)/2]$: twice the non-disturbing value at $t=0$, and an advantage that decays smoothly with $t$ rather than switching off. Wildfeuer found in the Zeno chain that the discrimination advantage requires the competing hypothesis to be the object's absence, since a click in the port that is dark without the object then proves the object present, and that it grows with the number of cycles because the absorbed fraction falls as $\pi^2/(4n)$ while that null stays exact \cite{Wildfeuer2026}. Negativity therefore gives a large single-shot gain that persists only for strong absorbers, and null ports give an accumulating gain that remains, weakly, for any $t<1$.

\subsection{The optimal single-pass readout}
\label{sec:optimalreadout}

\Cref{sec:manyshot} gives the exponent of a fixed readout. How far any readout can push it, with the photon heralded, is given by the quantum Chernoff bound, which also identifies the optimal single-pass readout. The exponent of every measurement on $M$ copies of \cref{sec:singleshot} 's two states is at most, by that bound \cite{Audenaert2007,Nussbaum2009}, $-\ln\min_s\mathrm{Tr}[\hat{\rho}_0^{\,s}\hat{\rho}_1^{1-s}]$. Here $\hat{\rho}_0=\proj{\psi}$ is pure, so the trace is $S^{-s}\chi^2$ with $\chi=1-(1-\sqrt{t})P(a)$ as before, smallest as $s\to0$, and
\begin{equation}
C(t)\le-2\ln\big[1-(1-\sqrt{t})P(a)\big]
\label{eq:qcb}
\end{equation}
for every single-pass readout. The bound is reached when $\ket{\psi}$ is itself one of the outputs. Every other output is then a null port, the null ports together have probability $S-\chi^2=(1-\sqrt{t})^2P(a)[1-P(a)]$, the object's absence is supported on a single outcome, and $f(s)=\chi^{2(1-s)}$. This is the equality case of Wildfeuer's lemma for null experiments \cite{Wildfeuer2026b}. At $P(a)=1/2$, \cref{eq:qcb} is the Elitzur--Vaidman exponent $-2\ln[(1+\sqrt t)/2]$ found above, so the balanced Elitzur--Vaidman interferometer is optimal among all readouts with its $P(a)$. Since $-\ln(1-x)$ is convex, \cref{eq:qcb} is at most $2/(1+\sqrt t)$ times the non-disturbing exponent $-\ln S$, reaching equality as $P(a)\to0$, implying a factor of two for an opaque object, whatever $P(a)$, and no gain for a weak absorber. The gains of $12\%$ to $15\%$ from negativity at $t=0$ should be compared with this factor of two, given the readout that reaches it has no negative KD term. For many photons, the resource is the null port, with \cref{eq:qcb} quantifying the information it can deliver in a single pass.

The null-port gain needs absence as the competing hypothesis. Between transmittances $t$ and $t-\delta$ with $\delta$ small, the quantum Chernoff exponent of the two heralded states is $\delta^2[F_{\mathrm{flux}}/8+S\mathcal Q/4]$ to leading order, where $F_{\mathrm{flux}}=P(a)^2/[S(1-S)]$ is the Fisher information of the record of survival or absorption and $S\mathcal Q=P(a)[1-P(a)]/(tS)$ that of the surviving state. Their sum $F_{\mathrm{flux}}+S\mathcal Q=P(a)/[t(1-t)]$ is \cref{sec:bound}'s registered bound. A readout limited by its Fisher information $F$ has the exponent $\delta^2F/8$, so the most any measurement gains over direct transmission is the factor $1+[1-P(a)](1-t)/S$. The pure surviving state is worth twice its Fisher information, as a null port tuned to $t$ would show, but for a weak absorber it holds a fraction of order $1-t$ of the total. This is the single-pass form of Ref.~\cite[Fig.~4]{Wildfeuer2026}'s flat family, the case whose absorbed-photon count Mitchison, Massar and Pironio bounded for arbitrary protocols \cite{Mitchison2002} and whose close-transparency form is Ref.~\cite{Massar2001}'s bound.

Appendices~\ref{app:finiteM} and~\ref{app:sequential} derive how long the single-shot gain lasts, and what a sequential test with an adaptively chosen interferometer can add. For a weak absorber, the gain over $M$ heralded trials is gone once three or four photons' worth of probability has been sent down the object path, and never lowers the error probability by more than the fraction $(1-t)/4$. In the unregistered case, an adaptive choice of interferometer gains nothing beyond the best fixed choice, negativity redistributes the evidence per absorbed photon as it does the Fisher information, and a gain of order one needs an imperfect null port. Multi-pass and multi-photon strategies, which do help with the loss unregistered (\cref{sec:multipass}), are beyond this analysis, and so is the registered case, where the first absorbed photon is itself a proof (Appendix~\ref{app:sequential}). \Cref{tab:map} summarises \cref{sec:fisher,sec:helps,sec:discrimination}'s bounds.

\section{Testing both corollaries on one interferometer}
\label{sec:test}

\cref{sec:fisher,sec:discrimination}'s results apply to \cref{sec:singleshot} and Ref.~\cite{Sagawa2026}'s silicon three-path interferometer, where the transmittance of path $\ket{f}$ is set by a Mach--Zehnder stage, and each of the three outputs is detected. \Cref{cor:detected} can be tested on the published device, and \cref{cor:absorbed} could be tested on a higher-visibility version of it. Recording the output rates over the full range of $t$, and fitting them against $\epsilon=1-\sqrt t$ (in which \cref{eq:Pmt} is quadratic), returns $P(m)$, $\varrho(f,m)$ and $|\braket{m}{f}|^2P(f)$ port by port, and with them $\partial_t\ln P(m|t)|_{t=1}$, which \cref{eq:linres} predicts to be $\mathrm{Re}[\wv{m}]$: $(1,1,-1)/3$ for $\ket{N_f}$ and $(1/2,1/2,-1)$ for $\ket{V_0}$. The negative entry determines the port that brightens, its squared slope giving \cref{cor:detected}'s information per detected photon, and the rate-weighted sum of the squared slopes should equal $P(f)$, which is \cref{eq:Fs1} and shown by \cref{cor:absorbed}. Since $|\KD(f,m)|^2=P(m)|\braket{m}{f}|^2P(f)$, that equality holds port by port, and in the sum, iff the KD terms are real, so this test measures \cref{thm:main}'s equality condition. From the Fisher matrix of that fit, with ten settings spanning $t\in[0,1]$ and the blocked and open ends included, establishing the negative slope at five standard deviations needs $2\times10^4$ heralded photons for $\ket{V_0}$ and $3\times10^4$ for $\ket{N_f}$, and confirming the sum over ports to $2\%$ needs $1\times10^6$ and $2\times10^6$. At the $100\,\mathrm{s}$ per setting used in Ref.~\cite{Sagawa2026}, the last requires a heralded rate of $2\times10^3\,\mathrm{s^{-1}}$. The same data gives \cref{sec:singleshot}'s single-shot distance, whose value at $t=0$ can already be computed from Ref.~\cite{Sagawa2026}. Their output distributions with path $\ket f$ open and blocked give $R(0)=3.4$ for $\ket{N_f}$ and $2.0$ for $\ket{V_0}$, against the ideal $10/3$ and $2$. For their third input, $\ket{B_f}=(2\ket1+2\ket2+3\ket3)/\sqrt{17}$, with $P(f)=1/51$ and an ideal $R(0)=22/3$, they give $3.3$, because the absorbed fraction inferred from the blocked distribution is $0.053$ rather than $1/51$. Since $R$ divides by the absorbed fraction, the largest single-shot gains, at small $P(f)$, are also the most sensitive to errors in it.

Testing for an anomalous weak value, a negative $\mathrm{Re}[\wv{m}]$, needs only its sign and is limited by counting statistics; neither state has $|\mathrm{Re}[\wv{m}]|>1$, so \cref{cor:detected}'s amplification, more than one unit per detected photon, would need a darker port. \cref{cor:absorbed}'s sum is an equality which is limited instead by systematic errors: reading $P(f)$ off the summed quadratic coefficients of the same fit cancels the port efficiencies and any multiplicative error in the $t$ scale, but decoherence depresses the sum by an amount of order $1-V$ for a visibility $V$, which is $\tfrac{10}3(1-V)$ for $\ket{N_f}$ and $\tfrac43(1-V)$ for $\ket{V_0}$ under dephasing in the basis $\{\ket1,\ket2,\ket3\}$, so the $2\%$ target asks for $V$ above $0.994$ and $0.985$ respectively. Dephasing between $\ket f$ and its complement, the form of the visibilities fitted in Ref.~\cite{Sagawa2026}, gives $\tfrac{16}9(1-V)$ and $3(1-V)$ instead, and asks for $V$ above $0.989$ and $0.993$. The visibilities fitted port by port in Ref.~\cite{Sagawa2026}, $0.92$ to $1.21$, are mostly several times further from one than either, so this half of the test needs a better device than the published one. The sign test is far more tolerant: the negative entry persists down to $V=1/2$ for $\ket{N_f}$ and $V=1/4$ for $\ket{V_0}$ under the first form of dephasing, and down to $V=1/4$ and $V=1/2$ under the second. A nonlinear $t$ scale is self-diagnosing, since \cref{eq:Pmt} is quadratic in $\epsilon$ and a distorted scale pushes some port-by-port ratios above one, which no state allows. \Cref{thm:main} fails if the sum exceeds $P(f)$, or if a shortfall larger than the measured visibility allows persists at every phase on path $\ket f$; the phase scan diagnoses an imaginary $\KD(f,m)$ that comes from an uncompensated phase in that arm, and a preparation complex in more than one relative phase is not diagnosed this way.

\begin{table*}[t!]
\caption{Information per absorbed photon for an object of transmittance $t$ in path $a$, with the absorptions unregistered or registered. ``Unregistered'' means a source of known mean whose absorbed photons are not detected as such (Poisson counts at the outputs); ``registered'' means the photon number is known, so an absorption is itself an outcome. Single-pass entries are results of this paper except where a reference is given; the last column gives what is known for many coherent passes of one photon. A dash means we don't consider that metric for that scenario.}
\label{tab:map}
\begin{ruledtabular}
\begin{tabular}{>{\raggedright\arraybackslash}p{0.21\textwidth}| >{\raggedright\arraybackslash}p{0.1\textwidth}| >{\raggedright\arraybackslash}p{0.34\textwidth}| >{\raggedright\arraybackslash}p{0.29\textwidth}}
Figure of merit & Absorptions & Single pass, any interferometer & Many passes of one photon \\
\hline \hline
Fisher information per absorbed photon (\cref{cor:absorbed}) & unregistered & $\le 1/[t(1-t)]$; reaching equality for real KD terms and pure inputs, \cref{eq:mixed} otherwise, though a path-basis readout attains it for every input & below the registered bound at any finite number of passes, and no better per absorbed photon than direct transmission unless the responses of different passes interfere (\cref{prop:manypass}); at most $0.932$ of that bound in the Zeno model, at $n\simeq2.01/(1-\sqrt t)$ cycles (Appendix~\ref{app:zeno}, cf.\ Ref.~\cite{Thomas2014}), and $0.648$ for repeated transmission as $t\to1$ \cite{Wildfeuer2026,Wildfeuer2026b}; the registered bound itself for a ring resonator at critical coupling on resonance, in the limit of a long wavepacket (\cref{sec:multipass}), with time-resolved detection, no background, a narrow source and no loss but the object's \\
\hline
Fisher information about $c$ per absorbed photon, dispersive object (\cref{eq:disp-abs}) & unregistered & $\le t'^2/[t(1-t)]$ $+\,4t\phi'^2[1-P(a)]/[(1-t)S]$; attained for a pure input, by a readout that typically depends on $c$; the phase term needs an interferometer, and gains up to $1.54\,x_0$ over the best direct transmission for a Lorentzian sample of resonant optical depth $x_0\ge1.59$ & phase accumulates over passes \cite{Wildfeuer2026b,Yu2026} \\
\hline
Fisher information per absorbed photon & registered & $\le 1/[t(1-t)^2]$ \cite{Massar2001,Facchi2002,Wildfeuer2026}; for one photon, equality for real KD terms and pure inputs (this paper); the bound itself holds for any multi-photon input \cite{Nair2018} & same bound for any fixed arrangement \cite{Wildfeuer2026} \\
\hline
Single-shot distance per absorbed photon (\cref{eq:Rweak,eq:helstrom}) & registered & $1+\Nneg/P(a)\le\tfrac12[1+1/\sqrt{P(a)}]$ as $t\to1$; gone after $M\simeq3/P(a)$ trials in \cref{eq:RMgauss}'s Gaussian regime (Appendix~\ref{app:finiteM}) & -- \\
\hline
Error exponent per absorbed photon (\cref{eq:chernoff,eq:qcb}) & registered & classical value for $t$ above an interferometer-dependent threshold $t^*$ ($\simeq0.10$--$0.11$ for \cref{fig:disc}'s three interferometers with negative KD terms, rising towards one, but only logarithmically, as a port darkens), negativity raising it for $t<t^*$; null ports add to it at every $t<1$, and no readout exceeds the classical value by more than the factor $2/(1+\sqrt t)$, approached with $\ket\psi$ as an output as $P(a)\to0$ & grows as $n\ln n$ in the number $n$ of cycles against empty space, and not at all between two partial transparencies \cite{Wildfeuer2026} \\
\hline
Error exponent per absorbed photon (Appendix~\ref{app:chernoff}) & unregistered & at most the direct-transmission value without null ports, to leading order in $1-t$, with equality for real KD terms and pure inputs; up to twice that with them & -- \\
\hline
Sequential evidence per absorbed photon (\cref{eq:etaweak}) & unregistered & without null ports, at most the direct value to first order in $1-t$; a gain of order one needs $(1-t)|\wv{m}|\gtrsim1$, then logarithmic in the extinction ratio; a null port gives proof after $(1+\sqrt t)/[(1-\sqrt t)|\braket{m}{a}|^2]$ absorbed photons on average & absorbed photons before proof $\to0$ as the number of passes grows \cite{Mitchison2001} \\
\end{tabular}
\end{ruledtabular}
\end{table*}

\section{Discussion}
\label{sec:discussion}

Is KD negativity the resource behind the gentleness (higher-information-per-absorbed-photon) of interaction-free measurement? 

For estimating how strongly a sample absorbs, it is not. The KD distribution of path and output is the outputs' linear response to a weak sample (\cref{eq:response}), which \cref{thm:main} turns into the Fisher information. Negativity cancels from the total information, instead only moving information between outputs, concentrating it, per detected photon, in weakly lit ones. The imaginary parts of the KD distribution, which are defined by the relation between the absorber path and the readout, instead reduce the total information. Lost coherence in the input reduces information further, and several absorbers are estimated no better than by direct transmission of each one's share of the light (\cref{eq:fim}). The imaginary part is also the imaging photon's response to a phase, which is the only way a single pass gains per absorbed photon, for an optically thick or inhomogeneously broadened sample (\cref{thm:dispersive}). Many passes of one photon gain only through the interference of the responses of different passes, and only with the absorptions unregistered (\cref{prop:manypass}). If the absorptions are instead registered, Ref.~\cite{Wildfeuer2026} rules out a gain for any fixed arrangement. 

For detection, negativity gives a single-shot gain per absorbed photon equal, for a weak absorber, to the negativity per unit absorption, which is at most $\tfrac12[1/\sqrt{P(a)}-1]$. That gain disappears after about $3/P(a)$ photons when $P(a)$ is small and no bright output is occupied more weakly than the object path, and for a readout without null ports, ideal or imperfect, it does not accumulate beyond a correction of relative order $1-t$. Any information gain which accumulates comes from null ports, ideal or imperfect, including when the test is sequential and the interferometer is chosen adaptively (Appendix~\ref{app:sequential}). The optimal single-pass test, with $\ket{\psi}$ as one output and null ports for the rest, has no negative KD term. \Cref{sec:test} shows that the identity behind \cref{cor:detected} can be tested on an existing silicon three-path interferometer and \cref{cor:absorbed} on a higher-visibility version of it.

Counterfactual gain and KD negativity are resources when the object we want to detect is strongly absorbing and the question is whether it is there, when a single photon must decide, or when photons at the detector rather than photons in the sample are the scarce quantity \cite{LupuGladstein2022,Fukuda2026}. In the last case, \cref{cor:detected} tells us to look at the output with the largest $|\mathrm{Re}[\wv{m}]|$ for the blocked-path projector, which the sensor's own linear response identifies. That output beats direct transmission per detected photon, for a weak absorber, only when the value exceeds one, in which case the weak value is anomalous. For estimation, this links our result to Ref.~\cite{ArvidssonShukur2020}: there, negativity of the doubly extended distribution is necessary for a postselected gain per retained particle for a unitary parameter; here, an anomalous weak value, which for a real readout is a negative entry of that doubly extended distribution for the path projector, is necessary for a gain per detected photon, but negativity gives no gain per absorbed photon.

A phase or a polarisation rotation is imprinted unitarily and accumulates coherently over passes, which is why many passes help for a phase \cite{Wildfeuer2026,Wildfeuer2026b,Yu2026}, why the Zeno chain responds to a phase shift \cite{Thomas2014}, and why phases help for absorption itself when the losses go unregistered (\cref{sec:multipass}). Interaction-free polarimetry of a birefringent but transparent sample is therefore not covered by the absorption bounds here, and remains a case where a counterfactual protocol can reduce the light deposited in the sample per unit of information. For a single pass, \cref{eq:disp-abs} show that the phase term is largest when little of the light meets the sample, where for a Lorentzian line its ratio to the absorption term at the same detuning is the square of the detuning in half-widths. Against direct transmission of a weak sample on resonance, there is no gain per absorbed photon for a Lorentzian line, and, with the absorptions unregistered, about a factor of $50$ for a room-temperature caesium cell probed in the wing. Against the best direct transmission of a Lorentzian sample of resonant optical depth $x_0$, the gain is up to $1.54\,x_0$ for $x_0\ge1.59$ with the absorptions unregistered, and up to $1+x_0$ with them registered (see Appendix~\ref{app:dispersive} for more details).

For a sensor designer aiming to make a low-dose imaging protocol, the results of our analysis come down to a few rules. To estimate an absorption with the least dose in one pass, send the light straight through, or use an interferometer whose KD terms are real, fed with a pure input: interference then adds nothing, and complex KD terms or lost coherence subtract information from the total this provides (\cref{eq:mixed,eq:visibility}). Many passes improve the device only when the absorbed photons are not registered, and then a critically coupled ring on resonance comes close to the registered bound for a long wavepacket, given a prior estimate of $t$ and a coupling that tolerates the detuning, and given that little light reaches its dark port from anything but the object: a background, a source broader than a small fraction of the ring's linewidth, and loss in the ring other than the sample's each lower the gain, and counting the whole wavepacket without time resolution removes it at critical coupling, though moving the coupling off critical recovers most of it (\cref{sec:multipass}, Appendix~\ref{app:zeno}). To gain in one pass the quantity sought must also shift the phase, meaning for an optical sample, whose detuning can be chosen, the sample must be optically thick or inhomogeneously broadened (\cref{sec:dispersive}). Electron microscopy of biological specimens is a case where the quantity sought shifts the phase and the damage comes from inelastic events that outnumber the elastic ones forming the image \cite{Henderson1995}; interaction-free designs using many passes have been proposed for opaque and transparent pixels \cite{PutnamYanik2009,Kruit2016}, with the specimen's phase shift noted but not analysed \cite{Kruit2016}, and interaction-free measurement with electrons has been demonstrated \cite{Turner2021}. If the inelastically scattered electrons are removed from the recorded beam, \cref{thm:dispersive} bounds any single-pass design per inelastic event and \cref{prop:manypass} says where a multi-pass design can gain, but evaluating either needs a model of how transmittance and phase depend on the specimen, which we have not attempted.

Everything above is counted per absorbed photon, but two of the results quoted in the Introduction are counted using a different metric: twin-beam probes gain information per photon incident on the sample, and Ref.~\cite{Belsley2022}'s ring resonator gains information per photon entering the device. In a single pass, the object absorbs the fraction $1-t$ of the light incident on it, whatever the state, so the twin-beam device improvement over classical light, up to the factor $1/(1-t)$, is the same per absorbed photon: the idler registers the photon number, and with ideal detection the twin beam reaches the registered bound, as a heralded single photon does. In the ring, counting per absorbed photon turns the advantage over every single-pass quantum probe into at best a tie with the heralded single photon, reached on resonance at critical coupling for a long wavepacket. It however doesn't affect the gain over classical light passed once, which comes from the interference between passes: with unregistered loss the ring then closes the factor $1/(1-t)$ that the twin beam closes by registering (\cref{sec:multipass}). Neither exceeds the registered bound, nor can any parallel probe \cite{Nair2018}, any fixed arrangement of one photon \cite{Wildfeuer2026}, or any coherent light. Per photon sent is the right normalisation to use when the source provides few high-quality photons at the right wavelength, or when the measurement is limited by the rate a detector can take. Per absorbed photon is the right normalisation to use when the sample is delicate, since a photon that goes through unabsorbed does nothing to the sample, so the damage a measurement does to the sample is set by its absorption.

This work leaves open three questions. The first is the status of the null port. \Cref{eq:brightcond} places it at the end of a family that begins with negative KD terms, and \cref{eq:qcb} makes it the optimal readout, but its own KD term vanishes. Negativity at the null port comes from the sum of the interference terms between path $a$ and the other paths, $\sum_{b\neq a}\mathrm{Re}[\braket{m}{a}\bra{a}\hat\rho\ket{b}\braket{b}{m}]$, which cancels the path's own contribution $|\braket{m}{a}|^2P(a)$ to the KD term. A click in a null port is not a proof of contextuality in the same sense that an anomalous weak value is \cite{Pusey2014}: the pre- and postselection are orthogonal there, whereas that proof assumes they are non-orthogonal. For an opaque object the Elitzur--Vaidman statistics, null port included, are reproduced by models in which the object randomises a hidden variable as well as removing the photon \cite{Catani2023,HanceKrnic2026}, and these models are Kochen--Specker noncontextual \cite{HanceKrnic2026} (\cref{sec:singleshot}). The null port instead violates the weaker non-disturbance assumption behind \cref{eq:R}'s baseline, and where that leaves it relative to \cref{eq:brightcond}'s negative KD terms, of which it is the limit, we have not settled.

The second is what happens when we make many passes with several photons. For a fixed arrangement of one photon, \cref{prop:manypass} puts the whole of the unregistered gain in the interference between passes, which means to find the optimal set-up we just maximise the norm in \cref{eq:manypass} against the absorption below it, over the intermediate optics and the number of passes. The supremum is the registered bound itself, approached by \cref{sec:multipass}'s critically coupled ring on resonance in the limit of a long wavepacket, so the chain only reaches $G^*=0.932$. The best arrangement at a fixed finite number of passes, where every arrangement falls short, is not known. Choosing the optics adaptively does not affect this: deferring the intermediate measurements onto a record register restores \cref{eq:manypass} with $\proj{a}$ replaced by $\proj{a}\otimes\mathds{1}$ and the record read as one more output, though the orthogonality that removes the gain has then to be tested on that larger space. An input of several photons escapes this only when its photons are correlated. A coherent state however does not: through any passive arrangement it leaves every output in a coherent state of amplitude $\sqrt N z_m(t)$, so the Fisher information of its output counts and the photons the object absorbs from it both scale with the mean photon number $N$, and every bound here holds per absorbed photon unchanged. A squeezed or path-entangled state through a many-pass arrangement is outside everything here, which is the case of most practical interest, since a dose-limited measurement uses many passes and does not register what the sample absorbs.

The third is what happens if, instead of trying to determine if a blockage is present or absent, we are trying to localise a blockage to one of a set of candidate locations. \cref{fig:setup} classifies outputs by the sign of the KD term for one blocked path, and \cref{thm:main} classifies readouts by the KD term's reality. For the multi-hypothesis problem of locating an unknown blockage among many paths \cite{Underwood2026,Chaturvedi2026}, the relevant object is instead the KD matrix over all paths. By \cref{eq:negtot}, its total negativity is at most $(\sqrt d-1)/2$, but how that negativity and the matrix's null structure trade off there is not yet known. A second coherent pass there raises the localisation probability for an opaque object above the one-shot optimum with no intermediate measurement needed \cite{Chaturvedi2026}, using a figure of merit not normalised per absorbed photon.

Beyond these questions, there are three generalisations which we have not considered: partial registration of the absorptions, which interpolates between the two cases; joint estimation of loss and phase as two parameters, where \cref{eq:Fsc} becomes a matrix; and a dispersive sample inside a ring, where the buildup and the dispersive term of \cref{eq:disp-abs} combine.

\textit{Acknowledgements---} The author acknowledges support from a Royal Society Research Grant (RG/R1/251590), an EPSRC Mathematical Sciences Small Grant (UKRI3647), and their EPSRC Quantum Technologies Career Acceleration Fellowship (UKRI1217). This work was carried out in cooperation with SPRIN-D, the Federal Agency for Breakthrough Innovation, within the SPRIN-D Funke QuantumSense Exploration.

\textit{Data and Code Availability---} A Wolfram Mathematica notebook reproducing all computations in this paper is openly available on Zenodo~\cite{Hance2026NegImagImagingCode}.

\bibliographystyle{unsrturl}
\bibliography{ref.bib}

\appendix

\section{Mixed inputs: how a background reduces each port's share}
\label{app:mixed}

\Cref{eq:mixed} says that each output keeps at most the fraction $R_m=|\bra{a}\hat\rho K_t\ket{m}|^2/[P(a)P(m|t)]\le1$ of the share $|\braket{m}{a}|^2P(a)/t$ it holds, and keeps that fraction when its KD term is real; an imaginary part takes a further $[\mathrm{Im}[\KD(a,m)]]^2/[t\,P(m|t)]$ from the port. Let $\ket{\psi_a}\propto\hat\rho\ket{a}$ be the part of the input that path $a$ is coherent with, and $\lambda_a=\|\hat\rho\ket{a}\|^2/P(a)$ its weight, so that $\hat\rho=\lambda_a\proj{\psi_a}+(1-\lambda_a)\hat\sigma$ with $\hat\sigma\ket{a}=0$; the remainder is a state because $\hat\rho\ge\hat\rho\proj{a}\hat\rho/P(a)$. Since $\hat\sigma\ket{a}=0$ implies $K_t\hat\sigma K_t=\hat\sigma$,
\begin{equation}
R_m=\frac{\lambda_a\,|\bra{m}K_t\ket{\psi_a}|^2}{P(m|t)}=1-\frac{(1-\lambda_a)\bra{m}\hat\sigma\ket{m}}{P(m|t)},
\label{eq:share}
\end{equation}
the fraction of output $m$'s counts that come from that part, so that for real KD terms
\begin{equation}
\Fs(t)=\frac{P(a)}{t}\Big[1-\hspace{-6pt}\sum_{m:P(m|t)>0}\frac{|\braket{m}{a}|^2(1-\lambda_a)\bra{m}\hat\sigma\ket{m}}{P(m|t)}\Big]:
\label{eq:docked}
\end{equation}
each port's share of path $a$ is reduced by the fraction of its counts the background supplies. In particular, \cref{eq:share} gives $R_m=1$ for a pure input, where $\lambda_a=1$, and, by continuity, at every output with $P(m|t)=0$, which has no background either. An output that $\ket{\psi_a}$ leaves dark receives nothing from it but the Elitzur--Vaidman probability $(1-\sqrt{t})^2|\braket{m}{a}|^2P(a)$, so it forfeits its share once the background exceeds that. A port that is bright but only faintly, or one at a transmittance where $\ket{\psi_a}$ darkens it, is affected in the same way by increased background. By \cref{eq:docked}, the pure-input value is recovered only where the background is dark in every port that path $a$ reaches. In \cref{eq:visibility}'s balanced interferometer, $\ket{\psi_a}\propto\ket{a}+V\ket{b}$ and $\lambda_a=(1+V^2)/2$, so at $t\to1$ the port that is dark at $V=1$ keeps $(1-V)/2$ of the half of path $a$ it holds and the bright port keeps $(1+V)/2$ of its half: for any $V<1$ the two add to one half, which is why half the information is lost as $t\to1$ (\cref{sec:extensions}). Equality now requires, besides real KD terms, that $\hat\rho^{1/2}K_t\ket{m}$ be parallel to $\hat\rho^{1/2}\ket{a}$ at every output with $\braket{m}{a}\neq0$, which for an input of full rank holds only when $\ket{a}$ is itself an output. An ideal null port doesn't lose the information: $P(m)=0$, that is a port dark at $t=1$, forces $\hat\rho\ket{m}=0$, and its bracket in \cref{eq:mixed} vanishes at every $t<1$. What instead loses the information is a port that the coherence would leave dark but that a background fills instead.

\section{Several objects: the restriction on \texorpdfstring{$\mathbf t$}{t} and the row sum rule}
\label{app:several}

\cref{eq:fim}'s restriction to transmittances at which every output is occupied is needed: unlike the single-parameter case, an output with $z_m(\mathbf t)=0$ cannot be assigned the limit of its term, because that limit depends on the direction of approach. In \cref{eq:amp}, $z_m(t)$ moves along a line, so it can vanish only if $\braket{m}{\psi}$ is a real multiple of $\braket{m}{a}\braket{a}{\psi}$, and then $\mathrm{Im}[\KD(a,m)]=0$ at every $t$ and the subtracted term vanishes identically; with several objects $z_m$ can vanish with the $\braket{m}{j}\braket{j}{\psi}$ in general position. The matrix inequality itself is unaffected, the subtracted matrix being positive semidefinite along every direction.

Let $\KD(j,m)=\braket{m}{j}\bra{j}\hat\rho\ket{m}$ be the KD term of path $j$ and output $m$, and $\varrho(j,m)=\mathrm{Re}[\KD(j,m)]$ its real part. With all objects weak, $\mathrm{Im}[\bar z_m\partial_{t_j}z_m]=\tfrac12\mathrm{Im}[\KD(j,m)]$ and $F_{jk}=\sum_m\varrho(j,m)\varrho(k,m)/P(m)$ when there are no null ports;\cref{eq:fim} then comes from the sum rule
\begin{equation}
\sum_{m:P(m)>0}\frac{\KD(j,m)\overline{\KD(k,m)}}{P(m)}=\bra{j}\hat\rho\ket{k}\,\bra{k}\hat\Pi_+\ket{j},
\label{eq:roworth}
\end{equation}
for which \cref{eq:sumrules}'s second rule is the case $j=k=a$. Where no output is dark, $\hat\Pi_+=\mathds1$ in \cref{eq:roworth} and the rows of the KD matrix, weighted by $1/\sqrt{P(m)}$, are orthogonal, whatever their signs; each dark output subtracts $\braket{k}{m}\braket{m}{j}$ from $\bra{k}\hat\Pi_+\ket{j}$. \Cref{eq:roworth} uses the pure input: for a mixed one its left-hand side does not factorise, and the identity fails already for $\hat\rho=\mathds1/d$.

\section{Detector efficiency, multi-photon inputs, the decomposition behind Corollary~\ref{cor:detected} and known bounds related to Lemma~\ref{lem:negmax}}
\label{app:efficiency}

\textit{Detector efficiency.} Detectors of efficiency $\zeta$, the same at every output and for direct transmission, change neither conclusion of \cref{sec:bound}: direct transmission stays best in both cases. Without registration every term of \cref{eq:Fsdef} is multiplied by $\zeta$, for every readout and for direct transmission alike. With registration a missed photon joins the absorbed ones in the no-click outcome, of probability $1-\zeta S$, and the total per absorbed photon, $[\zeta\Fs+\zeta^2P(a)^2/(1-\zeta S)]/[(1-t)P(a)]$, lies below the direct-transmission value $\zeta/[t(1-t)(1-\zeta t)]$ by at least $\zeta^2(1-\zeta)[1-P(a)]/[(1-t)(1-\zeta t)(1-\zeta S)]$.

\textit{Multi-photon inputs.} Nair's bound constrains only the mean photon number in the modes that meet the loss and allows lossless ancilla modes entangled with them; the other paths of a single-pass interferometer are such ancillas, so the registered bound holds for every multi-photon input to one, entangled across the paths or not, and not only for the single photons and coherent light considered here \cite{Nair2018}. The weak-value form \cref{eq:LR} of the likelihood ratio is more restricted: it holds click by click for Fock states of the mode $\ket\psi$ and for coherent light, but not for path-entangled inputs.

\textit{The conditional-uncertainty decomposition.} Eq.~(4) of Ref.~\cite{Ji2026}, with Eqs.~(5) and (6), is the decomposition of the variance [Eq.~(30) of Ref.~\cite{Hall2004}] into the average Ozawa--Hall conditional uncertainty \cite{Ozawa2003,Hall2004} and the variance of the conditional averages: for a rank-one generator and a complete rank-one readout without null ports their Eq.~(5) gives a conditional uncertainty $(\mathrm{Im}[\wv{m}])^2$ at every output, and the decomposition is \cref{eq:Fs1} with $P(a)^2$, the squared mean of $\hat\Pi(a)$, subtracted from both sides.

\textit{Known bounds related to Lemma~\ref{lem:negmax}.} Each KD term is a third-order Bargmann invariant, whose values at fixed $|\braket{a}{\psi}|$ fill an elliptical disc \cite{Zhang2025} with leftmost point $-\tfrac12\sqrt{P(a)}[1-\sqrt{P(a)}]$, the same function of $P(a)$ as \cref{eq:negmax}. Hofmann gives the single-term bound, for real states, in the equivalent form $|\braket{i}{j}|(1-|\braket{i}{j}|)/2$ at fixed overlap of the two projectors \cite[Eq.~(22)]{Hofmann2025}, and minimising over that overlap makes $-1/8$ the most negative entry any KD distribution can have \cite{Fernandes2024}. For a qubit, the negativity summed over both rows of the Margenau--Hill distribution is likewise at most $1/8$ \cite[Prop.~3]{Chakrabarty2026}. Proposition~2 of Ref.~\cite{Budiyono2024} bounds the KD nonclassicality $\sum_{a,m}|\KD(a,m)|-1$, maximised over the output basis, by $\sum_a\sqrt{P(a)}-1$, and holds for every path basis and for mixed states. Applied to the term in $a$ of its proof, $\sum_m|\KD(a,m)|\le\sqrt{P(a)}$, the first sum rule in \cref{eq:sumrules} and $|\mathrm{Re}[z]|\le|z|$ give $P(a)+2\Nneg\le\sqrt{P(a)}$, turning a bound on the modulus sum, which conflates nonreality with negativity, into the bound on $\Nneg$ that \cref{eq:Rweak} needs. Summing \cref{eq:negmax} over an orthonormal path basis, with $\Nneg(a)$ the negativity of path $a$ and $\Nneg_{\mathrm{tot}}=\sum_a\Nneg(a)$, and applying Cauchy--Schwarz to $\sum_a\sqrt{P(a)}$ gives
\begin{equation}
\Nneg_{\mathrm{tot}}\le\tfrac12\Big[\sum_a\sqrt{P(a)}-1\Big]\le\frac{\sqrt d-1}{2},
\label{eq:negtot}
\end{equation}
with equality in the last step if and only if every $P(a)=1/d$. Since $\sum_{a,m}|\KD(a,m)|\ge1+2\Nneg_{\mathrm{tot}}$, both bounds also follow from Proposition~2 of Ref.~\cite{Budiyono2024}, and the second from Theorem~2 of Ref.~\cite{ArvidssonShukur2021}, whose authors show it is attained when three real mutually unbiased bases exist, as at $d=4$.

\section{Dispersive samples}
\label{app:dispersive}

This appendix gives the numbers behind the dispersive samples of \cref{sec:dispersive}: a Lorentzian line of arbitrary optical depth, molecular bands and ring resonators, and a caesium vapour cell.

\textit{Optically thick samples.} With $x_0=2c\kappa/\gamma$ the resonant optical depth and $x=x_0\gamma^2/(\gamma^2+\delta^2)$ the optical depth at detuning $\delta$, the two terms of \cref{eq:disp-abs} sum, as $P(a)\to0$, to $(x_0/c^2)\,x/(e^x-1)$, which rises far from resonance to $x_0/c^2$, the value resonant direct transmission gives only for a weak sample. At finite $P(a)$ the factor $[1-P(a)]/S$ in \cref{eq:disp-abs} lowers this far-detuned value to $[1-P(a)]\,x_0/c^2$. With the absorptions unregistered, direct transmission gives only the first term, $x^2/[c^2(e^x-1)]$, which is $1/c^2$ times the function $y^2/(e^y-1)$ of repeated transmission \cite{Wildfeuer2026b} (Appendix~\ref{app:zeno}, \cref{fig:zeno}), since a thick sample is many thin ones in series; its maximum over the detuning is $0.648/c^2$, at $x=1.59$, when $x_0\ge1.59$, and $x_0^2/[c^2(e^{x_0}-1)]$, on resonance, when $x_0<1.59$. Against the best direct transmission, a single pass with a phase readout therefore gains up to $x_0/0.648=1.54\,x_0$ per absorbed photon when $x_0\ge1.59$, and up to $(e^{x_0}-1)/x_0$ for a thinner sample, which tends to one, the weak-sample result of \cref{sec:dispersive}. With the absorptions registered, direct transmission is best far from resonance, where its term $x^2e^x/[c^2(e^x-1)^2]$ tends to $1/c^2$, and there the phase readout gains $1+[1-P(a)]\,x_0$ over it. The registered first term and the phase term, $x^2e^x/(e^x-1)^2$ and $x(x_0-x)/(e^x-1)$ in units of $1/c^2$, both fall as $x$ rises, and the phase term's factor $[1-P(a)]/S$ is at most one, so no detuning gains more than $1+x_0$, the far-detuned limit as $P(a)\to0$; for $P(a)>1/2+1/x_0$ an intermediate detuning beats the far-detuned value. The far-detuned gain is approached only where the sample is thin at the detuning used, about $90\%$ of it at $x=0.2$, and there the absorption per trial, $(1-e^{-x})P(a)$, falls as $\gamma^2/(\gamma^2+\delta^2)$, so the trials rise with the detuning as $(\delta/\gamma)^2$, as for a weak sample.

\textit{Molecular bands and ring resonators.} A molecular band in solution is typically tens of terahertz wide and overlaps its neighbours, giving $(\delta/\gamma)^2$ of order unity, and its wings are not Lorentzian (Ref.~\cite{Whittaker2017} measured haemoglobin at $790$--$808\,$nm, away from its distinct features at $450$--$650\,$nm, and found a nearly flat spectrum). A ring resonator with the sample inside, the setting of Ref.~\cite{Belsley2022}, is not of this form either. Its transfer function, not its logarithm, is Lorentzian in the detuning, and the sample enters through the width, so the phase response to the intracavity loss vanishes on resonance and at half a free spectral range, where the transfer function is real. Detuning the ring moves information between the two terms of \cref{eq:disp-abs} while lowering their sum below its resonant value, and the ring's gain per photon entering the device is the resonant buildup, the interference between passes of \cref{sec:multipass}, not a detuning.

\textit{A caesium vapour cell.} On the caesium D2 line, of half-width $\gamma/2\pi=2.6\,\mathrm{MHz}$ ($3.0\,\mathrm{MHz}$ for rubidium), a room-temperature cell is Doppler-broadened: ten natural half-widths, $26\,\mathrm{MHz}$, is inside the Doppler core, where the phase term of \cref{eq:disp-abs} is $2\%$ of the absorption term, and the Lorentzian wing, where the ratio is $(\delta/\gamma)^2$, takes over only beyond about $0.7\,\mathrm{GHz}$; at $3\,\mathrm{GHz}$, detuned outside the ground-state hyperfine doublet, $(\delta/\gamma)^2\simeq10^6$. In the wing the absorption is the natural line's while the dispersion is that of the whole Doppler profile, so per absorbed photon the information is $\sigma_D\sqrt2/(\gamma\sqrt\pi)\simeq50$ times the resonant direct-transmission value of a weak sample, with $\sigma_D/2\pi=159\,\mathrm{MHz}$ the rms Doppler shift. With the absorptions registered, direct transmission of a weak sample gives $1/c^2$ at every detuning, and the gain in the wing is $1+[1-P(a)]\,x_{\mathrm L}$, where $x_{\mathrm L}$, the resonant optical depth of the same atoms without Doppler broadening, is about $\sigma_D\sqrt2/(\gamma\sqrt\pi)$ times the peak optical depth. That gain is about $44$ for the cell below at $1\,\mathrm{GHz}$, but close to one for a peak optical depth well below $0.02$. A cell of unit peak optical depth probed $1\,\mathrm{GHz}$ below the caesium D2 line, away from the other ground-state component, with $P(a)=0.1$ needs $2\times10^2$ absorbed photons for $1\%$ on $c$, hence $6\times10^6$ trials and under an hour at the heralded rate of \cref{sec:test}; at $3\,\mathrm{GHz}$ the absorbed photons are the same and the trials nine times as many.

\section{Many passes: the Zeno model and the ring resonator}
\label{app:zeno}

One cycle of the model of \cref{sec:multipass} maps the amplitudes $(r,s)$ in the reference and sample arms to $(r\cos\theta-s\sin\theta,\ \sqrt{t}[r\sin\theta+s\cos\theta])$ with $\theta=\pi/(2n)$, starting from $(1,0)$. With the two arms detected and the loss unregistered, the model's crossover, the transmittance above which it gives more Fisher information per absorbed photon than direct transmission, rises with the number of cycles, from $t=0.33$ at $10$ cycles to $t=0.85$ at $500$, and is lowest, $t=1/9$, for two cycles, where the information per absorbed photon is $2(1+2\sqrt t+2t)/[t(1-t)(3+2\sqrt t+t)]$ (one cycle is direct transmission). With only one arm detected, as in Fig.~9 of Ref.~\cite{Thomas2014}, the crossover at $10$ cycles is $t=0.56$ for either arm, close to their one half. Let $t\to1$ and $n\to\infty$ with $\nu=n(1-\sqrt{t})$ fixed. To first order in $1/n$ a cycle changes $r$ by $-\theta s$ and $s$ by $\theta r-(\nu/n)s$, so after $n$ cycles the amplitudes are the solution at unit time of
\begin{equation}
\dot r=-\tfrac\pi2\,s,\qquad\dot s=\tfrac\pi2\,r-\nu s,
\end{equation}
a two-level system driven through a quarter turn with its second level damped at the rate $\nu$. With $\Omega=\sqrt{\pi^2-\nu^2}$,
\begin{equation}
\begin{split}
r&=e^{-\nu/2}\Big[\cos\frac \Omega2+\frac\nu \Omega\sin\frac \Omega2\Big],\\
s&=e^{-\nu/2}\,\frac\pi \Omega\sin\frac \Omega2,
\end{split}
\end{equation}
continued to hyperbolic functions for $\nu>\pi$. The outputs have probabilities $r^2$ and $s^2$, the object absorbs $1-r^2-s^2$, and since $\partial_t=-(n/2\sqrt{t})\partial_\nu$ while the registered bound $1/[t(1-t)^2]$ tends to $n^2/(4\nu^2)$, the Fisher information per absorbed photon as a fraction of that bound is
\begin{equation}
G(\nu)=\frac{4\nu^2\,[(\partial_\nu r)^2+(\partial_\nu s)^2]}{1-r^2-s^2}.
\end{equation}
$G$ vanishes linearly as $\nu\to0$, where the photon is transferred to the sample arm and the scheme is a single transmission of a weak absorber, and tends to $(\pi^2/2+2)/\nu$ as $\nu\to\infty$, where the Zeno effect holds the photon in the reference arm. Between them it has one maximum, $G^*=0.931918$ at $\nu^*=2.012763$. The stationarity condition mixes $e^\nu$ with trigonometric functions of $\Omega$ and we have found no closed form for its root. At $t=0.95$ and $n=80$ the $n$-cycle model gives $0.93171$, and at $\nu=2$ it converges to $G(2)$ as $n$ grows. Away from the limit the discrete chain stays close to $G^*$: its best fractions are $0.914$, $0.925$, $0.930$ and $0.932$ at $t=0.3$, $0.5$, $0.7$ and $0.9$, reached at $n=3$, $6$, $11$ and $38$. For light passed $\ell$ times through the object itself, the detected fraction is $t^\ell$, the Fisher information per photon sent is $\ell^2t^{\ell-2}$, the object absorbs $1-t^\ell$, and the fraction of the registered bound is $\ell^2t^{\ell-1}(1-t)^2/(1-t^\ell)\to y^2/(e^y-1)$ with $y=\ell(1-t)$; its maximum is $y^*(2-y^*)=0.647610$ at $y^*=2+W_0(-2e^{-2})=1.593624$, with $W_0$ the principal branch of the Lambert function.

At $t=0.95$ and $n=80$ the coherent sum raises the first term of \cref{eq:manypass} to $47$ times the value $\sum_k|c_k|^2\|A_k\ket a\|^2/t$ the passes would give independently and $18.6$ times the direct-transmission reference $\sum_k|c_k|^2/t$; the two output ports alone then give $392$ units of Fisher information per absorbed photon, against $21$ for single-pass direct transmission and $421$ for the registered bound. The maximum $G^*$ lies on the underdamped side of critical damping, $\nu=\pi$. The best number of cycles is therefore $n^*\simeq2.01/(1-\sqrt t)$, which is $79$ at $t=0.95$ (the discrete model is best at $78$), and the $0.93$ of the two-port chain at $80$ cycles above is the optimum of the model. No readout of the two ports can do better than $G$, since $G$ already equals $4\|\partial_t\Psi_t\|^2$ over the absorption, the value \cref{eq:manypass} allows when the amplitudes are real. The remaining $7\%$ is not held by the absorbed outcome: at $\nu^*$ the chain's absorption is nearly stationary in $\nu$, so registering it would add only $0.003$ of the bound, bringing the three-outcome total to $0.935$. The function $y^2/(e^y-1)$ of repeated transmission is also the one that fixes the loss-limited optimum of Ref.~\cite{Wildfeuer2026b}. With the absorbed photons registered, the Zeno chain approaches the registered bound only asymptotically in the number of cycles [the large-$L$ expansion, Eqs.~(17) and (20), of Ref.~\cite{Facchi2002}, for binomial statistics]. Wildfeuer finds the Kwiat chain $1$--$9\%$ below it at $100$ cycles, the shortfall being the record of which pass absorbed the photon, and reports the bound as strict ``for arrangements in which the photon genuinely returns to the object'' \cite{Wildfeuer2026}. That holds for his examples, and at any finite number of passes whenever some pass's share of the absorption is not stationary in $t$, since that record then holds information; a ring resonator on resonance comes arbitrarily close to the bound for a long enough wavepacket (below).

For the ring resonator of \cref{sec:multipass}, off resonance at $t=0.9$ a detuning $\theta_{\mathrm{rt}}=0.01\,\mathrm{rad}$, a tenth of the half-width of the resonance, leaves $G=0.009$ at $r_{\mathrm c}=\sqrt{t}$, while overcoupling to $r_{\mathrm c}=\sqrt{t}-0.03$ keeps $G=0.89$. With a wavepacket of finite length, $G<1$, as for every arrangement with finitely many passes, since both have a first pass. The outputs, the absorbed outcome and the record of which pass absorbed the photon together cannot exceed the registered bound, the quantum Fisher information of the purified state \cite[Eq.~(24)]{Wildfeuer2026}, so $G=1$ needs the absorbed outcome and the record to hold no information. The record holds none only if every pass's share $|c_k|^2/\sum_j|c_j|^2$ of the absorption is stationary in $t$; since the first nonzero $c_k$ does not depend on $t$, $\sum_k|c_k|^2$ is then stationary, so the absorption $(1-t)\sum_k|c_k|^2$ is not, and the absorbed outcome holds information. The ring returns the photon to the object every round trip and loses nothing to that return: in its steady state the absorption is uniform in time, so the record of which pass absorbed the photon, which the object keeps \cite{Wildfeuer2026}, holds no information about $t$.

\textit{What lowers the ring's gain.} At $t=0.9$ the ideal ring on resonance gains $1/(1-t)=10$ over direct transmission per absorbed photon with the loss unregistered. Its information is held in the through port, which is dark at critical coupling, so light reaching that port from anywhere else lowers the gain and moves the best coupling away from critical. A background of $b$ counts per photon entering, compared with direct transmission with the same background, leaves a gain of $8.3$ at $b=0.01$ and $3.6$ at $b=1$. A Lorentzian source detected without time or frequency resolution, with the coupling re-optimised, keeps $90\%$ of the gain only if its linewidth is below $0.44\%$ of that of the critically coupled ring. Loss in the ring other than the object's, of round-trip transmittance $t_{\mathrm i}$, also lowers it. With the object spread round the ring, as in Ref.~\cite{Belsley2022}, and the coupling critical for the total loss, the gain per photon the object absorbs is $(1-t)t_{\mathrm i}\ln(tt_{\mathrm i})/[(1-tt_{\mathrm i})^2\ln t]$. For the ring of radius $50\,\mu\mathrm m$ in Ref.~\cite{Belsley2022}'s example, with their $2\,\mathrm{dB/cm}$ waveguide loss ($0.06\,\mathrm{dB}$ per round trip), this is $8.8$ at $t=0.9$. It falls fastest for the first fraction of a decibel ($9$ at $0.05\,\mathrm{dB}$ per round trip, $5$ at $0.46\,\mathrm{dB}$) and reaches direct transmission at $3.8\,\mathrm{dB}$ per round trip, or at $2.8\,\mathrm{dB}$ for a point object met before the waveguide loss. Eq.~(17) of Ref.~\cite{Belsley2022} gives the corresponding limit per photon entering, against the best single-pass probe. The Zeno model of \cref{sec:multipass} likewise assumes optics that lose photons only to the object.

\section{Chernoff information}
\label{app:chernoff}

With the photon heralded, the output probabilities with the object absent, $P(m)$, and present, $P(m|t)$ from \cref{eq:Pmt}, were computed with real amplitudes and $f(s)=\sum_{m:P(m)>0}P(m)^sP(m|t)^{1-s}$ was minimised numerically over $s\in[0,1]$; the Chernoff information is $C=-\ln\min_s f(s)$. Since $f$ is convex, the minimum sits at $s\to0^+$ whenever
\begin{equation}
\begin{split}
f'(0^+)=&\sum_{m:P(m)>0}P(m|t)\ln\frac{P(m)}{P(m|t)}\\
=&(1-t)P(a)-\tfrac12(1-t)^2\sum_{m:P(m)>0}\frac{\varrho(a,m)^2}{P(m)}\\
&+\tfrac14(1-t)^2P(a)\sum_{m:P(m)=0}|\braket{m}{a}|^2+O((1-t)^3)
\end{split}
\end{equation}
is non-negative, and then $C=-\ln f(0^+)$, which is \cref{eq:chernoff}. The threshold $t^*$ is the root of the first, unexpanded line, $f'(0^+)=0$; the truncated expansion has no root, being at least $(1-t^2)P(a)/2$ for every readout by the second sum rule.

\textit{A nearly dark port.} The expansion fails for an output whose occupation $P(m)$ is small beside its Elitzur--Vaidman probability $B_m=(1-\sqrt{t})^2|\braket{m}{a}|^2P(a)$. For real amplitudes on the side where its KD term is negative, $P(m|t)=[\sqrt{P(m)}+\sqrt{B_m}]^2\simeq B_m$, so its term in $f'(0^+)$ is $P(m|t)\ln[P(m)/P(m|t)]\simeq-B_m\ln[B_m/P(m)]$, which is the lowering quoted in \cref{sec:manyshot}; the bright outputs still contribute $(1-t)P(a)$ to first order in $1-\sqrt{t}$. The threshold is therefore where $(1-t)P(a)\simeq B_m\ln[B_m/P(m)]$. For the two-path interferometer of \cref{fig:fisher}(b) read out at $\delta\theta$ from the dark angle $\theta_0$, $P(m_-)=\sin^2\delta\theta$ and $B_-=(1-\sqrt{t})^2p\cos^2\theta$, so the condition is $1+\sqrt{t}=2(1-\sqrt{t})\cos^2\theta\,\ln[(1-\sqrt{t})\cos{\theta}\sqrt{p}/\delta\theta]$, and as $\delta\theta\to0$, where $\sqrt{t}\to1$ and $\cos^2\theta\to1-p$, this is the asymptotic form $1-\sqrt{t^*}\simeq1/\{[1-P(a)]\ln[(1-\sqrt{t^*})\sqrt{P(a)[1-P(a)]}/\delta\theta]\}$ of \cref{sec:manyshot}. The root of the asymptotic form is $t^*=0.55$ at $\delta\theta=0.001$ against the exact $0.56$, and $0.8012$ at $\delta\theta=10^{-6}$ against $0.8026$; the exact root of $f'(0^+)=0$ gives the values quoted in \cref{sec:manyshot}.

In the unregistered case the record is a set of independent Poisson counts with means proportional to $P(m)$ or $P(m|t)$, and the Chernoff information per photon sent is $C_{\mathrm P}=\max_s\sum_m[sP(m)+(1-s)P(m|t)-P(m)^sP(m|t)^{1-s}]$. For a weak absorber, with $P(m|t)-P(m)=-(1-t)\varrho(a,m)+O((1-t)^2)$ on the bright outputs, the quantity maximised is $s(1-s)A/2+(1-s)B+O((1-t)^3)$, where $A=(1-t)^2\sum_{m:P(m)>0}P(m)(\mathrm{Re}[\wv{m}])^2$ and $B=\sum_{m:P(m)=0}P(m|t)=(1-\sqrt{t})^2P(a)\sum_{m:P(m)=0}|\braket{m}{a}|^2$ is the null-port probability. Without null ports the maximum sits at $s=1/2$ and $C_{\mathrm P}=A/8$; per absorbed photon this is at most $\tfrac18(1-t)$ by \cref{eq:Fs1}, which agrees to this order with the direct-transmission value, $\tfrac18(1-t)+O((1-t)^2)$ per absorbed photon. With null ports the maximum moves to $s=\tfrac12-B/A$ and $C_{\mathrm P}=(A+2B)^2/(8A)$ for $B\le A/2$, and $C_{\mathrm P}=B$ otherwise. For the balanced Elitzur--Vaidman interferometer $A=(1-t)^2/4$ and $B=(1-\sqrt{t})^2/4\simeq(1-t)^2/16$, so $C_{\mathrm P}=\tfrac{9}{128}(1-t)^2$, which is $9/8$ of the direct-transmission exponent per absorbed photon: in the unregistered case the null port enters at the leading order, $(1-t)^2$, whereas in the heralded case it enters one order below the leading term $(1-t)P(a)$.

The largest gain of this kind comes from the readout of \cref{eq:qcb}, with $\ket\psi$ as one output. Its single bright output has $P(m)=1$ and $\wv{m}=P(a)$, so $A=(1-t)^2P(a)^2$, and its null ports have $B=S-\chi^2=(1-\sqrt{t})^2P(a)[1-P(a)]\simeq\tfrac14(1-t)^2P(a)[1-P(a)]$. The condition $B\le A/2$ is $P(a)\ge1/3$. Per absorbed photon and relative to direct transmission the exponent is therefore
\begin{equation}
\frac{8\,C_{\mathrm P}}{(1-t)^2P(a)}=
\begin{cases}
2\,[1-P(a)], & P(a)\le1/3,\\[2pt]
[1+P(a)]^2/[4P(a)], & P(a)\ge1/3,
\end{cases}
\end{equation}
to leading order in $1-t$: $9/8$ at $P(a)=1/2$ as above, $1$ for direct transmission, and a factor approaching two for weak coupling. Two is also the most that any measurement can give. The records are those of the coherent states $\ket{\sqrt N\psi}$ and $\ket{\sqrt NK_t\psi}$, whose quantum Chernoff exponent, both records being pure, is $-\ln|\braket{\sqrt N\psi}{\sqrt NK_t\psi}|^2=N(1-\sqrt{t})^2P(a)$ \cite{Audenaert2007}, or $(1-\sqrt{t})/(1+\sqrt{t})\simeq\tfrac14(1-t)$ per absorbed photon, twice the direct-transmission value. The factor of two between a null measurement and one limited by its Fisher information is the same one met for the heralded photon in \cref{sec:optimalreadout}.

\textit{Counterfactual ghost imaging.} Counterfactual ghost imaging \cite{Hance2021} fits the classification of \cref{sec:manyshot}. Each pixel of the object is either opaque or absent, so the task is discrimination against absence at $t=0$. The chained Zeno protocol makes the detector $D_1$ a null port that fires only when the pixel blocks, and cycling reduces the probability that a blocking pixel absorbs the photon towards zero, so the signal-to-noise ratio at fixed absorbed photon number grows without bound with the number of inner cycles in lossless optics (a loss per cycle instead sets a best number of cycles, as for any Zeno chain). That advantage is the null-port mechanism of \cref{fig:disc}(b) amplified by the Zeno effect. Ref.~\cite{Hance2021} noted that the same interference makes the protocol sensitive to the phase imprinted by a transparent object. Applied instead to the grey level of a translucent one, it would fall under \cref{cor:absorbed} for a single pass and, for the chain, under Wildfeuer's many-pass bound with the absorbed photons registered \cite{Wildfeuer2026}. The counterfactual gain of \cref{eq:Rweak} would improve the contrast of a single photon without improving the image built from many.

\section{How long the single-shot gain lasts}
\label{app:finiteM}
\label{sec:finiteM}

For $M$ heralded trials a non-disturbing definite-path scheme detects the object only by an absorption, with statistical distance $1-S^M$, and we take $R_M=\Delta_M/(1-S^M)$, so that $R_1=R$. The click records that favour ``present'' add to the distance, $\Delta_M=(1-S^M)+\sum_{\mathbf k}[P_1(\mathbf k)-P_0(\mathbf k)]_+$, where $\mathbf k$ is a record of $M$ clicks, $P_0$ and $P_1$ are its probabilities with the object absent and present, and $[x]_+=\max(x,0)$. For a weak absorber the likelihood ratio of a record is, by \cref{eq:LR}, $1-(1-t)Y$ with $Y=\sum_{i=1}^M\mathrm{Re}[\wv{m_i}]$ the sum of the real weak values of the outputs that clicked, and
\begin{equation}
\lim_{t\to1}R_M=1+\frac{\mathrm E_0\big[(-Y)_+\big]}{M\,P(a)},
\label{eq:RM}
\end{equation}
the expectation being over records with the object absent. The many-photon gain is the expected negative part of the summed weak values. For $M=1$ this is \cref{eq:Rweak}. For larger $M$ the sum rules fix the mean of $Y$ at $MP(a)$ and, for real KD terms without null ports, its variance at $MP(a)[1-P(a)]$, so negative sums become rare once the mean exceeds the spread. When $P(a)$ is small and no bright output is occupied more weakly than the object path, so that every port has clicked many times before this happens, $Y$ is Gaussian and the gain depends on the interferometer only through $x^2=MP(a)/[1-P(a)]$. It is below $10\%$ at $x^2=1$, below $1\%$ at $x^2=3$, and falls as $e^{-x^2/2}/x^3$ beyond (\cref{eq:RMgauss} below, \cref{fig:trials}): it is gone once three or four photons' worth of probability has been sent down the object path, of which the object has absorbed a fraction $1-t$. The same expectation fixes what negativity does to the error probability. The non-disturbing error is $\tfrac12S^M$, and negativity reduces it by the fraction $\sum_{\mathbf k}[P_1-P_0]_+/S^M\to(1-t)\,\mathrm E_0[(-Y)_+]$. For every readout $Y$ has mean $\mu=MP(a)$ and variance $\sigma^2\le MP(a)[1-P(a)]$, so
\begin{equation}
\begin{split}
\mathrm{E}_0[(-Y)_+]&=\tfrac12(\mathrm{E}_0|Y|-\mu)\le\tfrac12\big[\sqrt{\mu^2+\sigma^2}-\mu\big]\nonumber\\
&\le\frac{\sigma^2}{4\mu}\le\frac{1-P(a)}4,
\label{eq:quarter}
\end{split}
\end{equation}
which at $M=1$ is \cref{lem:negmax}. The bound is approached, after $M\simeq[1-P(a)]/(2\Nneg)$ trials, by an output with $P(m)\ll|\braket{m}{a}|^2P(a)$ and a large negative weak value, the imperfect null port of Appendix~\ref{app:sequential}, whose rare clicks make $Y$ Poisson rather than Gaussian; in the Gaussian regime the reduction is at most about a tenth of $1-t$ (below). For a weak absorber negativity therefore never lowers the error probability of a test by more than a quarter of $1-t$, and it does so when the test has barely begun to discriminate.

\textit{The Gaussian regime.} For real KD terms without null ports, when $P(a)$ is small and no bright output is occupied more weakly than the object path, the sum $Y$ of \cref{eq:RM} is Gaussian at large $M$, with mean $MP(a)$ and variance $MP(a)[1-P(a)]$, and
\begin{equation}
R_M\simeq1+\frac12\bigg[\sqrt{\frac2\pi}\,\frac{e^{-x^2/2}}{x}-\mathrm{erfc}\Big(\frac{x}{\sqrt2}\Big)\bigg],\quad x^2=\frac{MP(a)}{1-P(a)} .
\label{eq:RMgauss}
\end{equation}
\Cref{fig:trials} compares \cref{eq:RMgauss} with sums over all click records for the interferometers of \cref{fig:disc}. At $t=0.99$ the gains of all three fall with $x^2$ as \cref{eq:RMgauss} does, over six decades between them. The curve is the limit of small $P(a)$. The two-path points with $P(a)=1/20$ lie between $2\%$ and $84\%$ above it, the excess growing in the tail ($41\%$ at $x^2=8$, $84\%$ at $x^2=16$) and falling with $P(a)$ (from $41\%$ to $6\%$ at $x^2=8$ when $P(a)$ is reduced to $1/200$). The state $\ket{V_0}$, with $P(f)=1/3$, exceeds the curve by a factor of up to $56$ in the tail, where the gain is below $10^{-4}$. At $t=0$ the gain starts higher, from the values in \cref{fig:disc}(a), and ends sooner: both distances approach one, so $R_M-1\le S^M/(1-S^M)$, and what is left of the advantage is the ratio of the error probabilities, which grows as $e^{M(C+\ln S)}$. In the Gaussian regime the fraction $(1-t)\,\mathrm E_0[(-Y)_+]$ by which negativity reduces the error probability (above) is $(1-t)[1-P(a)]\,h(x)$, with $h(x)=x[\varphi_{\mathrm G}(x)-xQ(x)]$, $\varphi_{\mathrm G}$ and $Q$ being the standard normal density and upper tail probability. The maximum of $h$ is $0.101$, at $x=0.612$, the root of $\varphi_{\mathrm G}(x)=2xQ(x)$ (for the balanced two-path readout with $P(a)=1/20$ the sum over records gives $\mathrm E_0[(-Y)_+]/[1-P(a)]$ a maximum of $0.114$, at $M=7$). Outside this regime the bound \cref{eq:quarter} is approached by an imperfect null port: for the two-path readout with $P(a)=1/20$ at $0.003\,\mathrm{rad}$ from its dark angle, on the side where its KD term is negative, $\mathrm E_0[(-Y)_+]=0.235$ at $M=721$, against $[1-P(a)]/4=0.2375$.

\begin{figure}[t]
\centering
\includegraphics[width=\columnwidth]{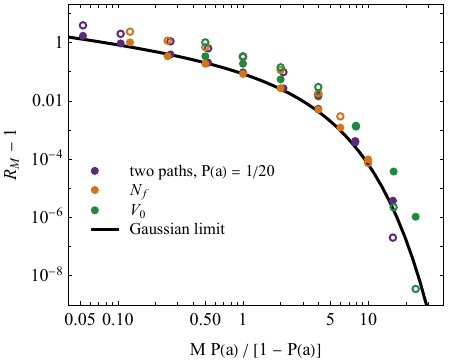}
\caption{Decay of the single-shot gain with the number $M$ of heralded trials: $R_M-1$ against $x^2=MP(a)/[1-P(a)]$ for the three interferometers of \cref{fig:disc} that have negative KD terms, from sums over all click records at $t=0.99$ (filled) and $t=0$ (open), and the limit \cref{eq:RMgauss} of a weak absorber and small $P(a)$ (black).}
\label{fig:trials}
\end{figure}

\section{Sequential tests, adaptive strategies and the evidence per absorbed photon}
\label{app:sequential}

The Chernoff exponent assumes a fixed number of trials and a fixed interferometer. A sensor can instead stop as soon as the evidence suffices, and can change the interferometer after every click. In the unregistered case the evidence per photon sent is the Poisson relative entropy $D=\sum_{m:\,P(m)>0}P(m)\,\Lambda(L_m)$, with $L_m$ the likelihood ratio \cref{eq:LR} of a click in output $m$ and $\Lambda(L)=L\ln L-L+1$; a null port has $L_m=\infty$, and its click is a proof. Nothing is absorbed when the object is absent, so what a sequential test spends is the expected number of photons absorbed when it is present. By Wald's identity, a test with false-alarm probability $\alpha$ that always ends by detecting a present object absorbs on average at least $\ln(1/\alpha)/\eta_{\max}$ photons, where $\eta_{\max}$ is the largest value over the available interferometers of the evidence per absorbed photon $\eta=D/[(1-t)P(a)]$. The one-sided sequential probability ratio test with the best fixed interferometer attains this, up to the usual overshoot, so choosing the interferometer adaptively gains nothing beyond the best fixed choice. Direct transmission gives $\eta_{\mathrm{dir}}=\Lambda(t)/(1-t)=\tfrac12(1-t)+\tfrac16(1-t)^2+\dots$, and for a geometry without null ports, expanding \cref{eq:LR} for a weak absorber gives
\begin{equation}
\begin{split}
\eta=&\frac{1-t}{2P(a)}\sum_{m:P(m)>0}P(m)(\mathrm{Re}[\wv{m}])^2\\
&+O\big((1-t)^2\big).
\label{eq:etaweak}
\end{split}
\end{equation}
The sum is the first form of \cref{eq:Fs1} without its null-port term, so $\eta\le\eta_{\mathrm{dir}}$ to this order for every such geometry, with equality when the KD terms are real: negativity redistributes the evidence as it redistributes the Fisher information. An ideal null port ends the test at its first click, after $(1+\sqrt{t})/[(1-\sqrt{t})|\braket{m}{a}|^2]$ absorbed photons on average; for the balanced Elitzur--Vaidman interferometer this is $2(1+\sqrt{t})/(1-\sqrt{t})$, against about $2\ln(1/\alpha)/(1-t)$ for direct transmission, so the null port wins once $\alpha$ is below about $e^{-(1+\sqrt{t})^2}$, and by an unbounded factor as $\alpha\to0$.

The sign of the weak values enters \cref{eq:etaweak} at second order: for real KD terms and no null ports, as shown below,
\begin{equation}
\begin{split}
\eta-\eta_{\mathrm{dir}}=&-\tfrac{(1-t)^2}{12P(a)}\sum_mP(m)\,\wv{m}\big(1-\wv{m}\big)^2\\
&+O\big((1-t)^3\big),
\label{eq:etasecond}
\end{split}
\end{equation}
so an output with a negative weak value raises the evidence, and an output with a weak value between $0$ and $1$, or above $1$, lowers it. By the sum rules a negative weak value is necessary for a gain at this order and sufficient when no weak value exceeds one, and the gain is of relative order $1-t$ (the values for the interferometers of \cref{fig:disc} are given below). A gain of order one at fixed $t$ needs an output with $(1-t)|\wv{m}|\gtrsim1$. Each term of $D$ is $P(m)\,\Lambda(|1-(1-\sqrt{t})\wv{m}|^2)$, and an output with $0\le\wv{m}\le1$ contributes at most its share $P(m)\wv{m}\Lambda(t)$ of the direct-transmission evidence whenever $t\ge e^{-2}$, as shown below. The outputs with $(1-\sqrt{t})|\wv{m}|\ll1$, whatever the sign or phase of their weak values, together contribute at most $\eta_{\mathrm{dir}}$ to $\eta$ to leading order in $1-t$, up to a relative correction of order $(1-\sqrt{t})|\wv{m}|$. The output's own amplitude $\braket{m}{\psi}$ must therefore be comparable to, or smaller than, the amplitude $(1-\sqrt{t})\braket{m}{a}\braket{a}{\psi}$ that the object adds. Such an output is the imperfect form of a null port. For the two-path interferometer of \cref{fig:fisher}(b) read out near its dark angle, with $u$ the ratio of the port's own amplitude to the one the object adds and the readout on the side where the port's KD term is negative,
\begin{equation}
\frac{\eta}{\eta_{\mathrm{dir}}}=1+\cos^2\theta_0\Big[(1+u)^2\ln\Big(1+\frac1u\Big)-u-\frac32\Big]+O(1-t),
\label{eq:etadarklead}
\end{equation}
which is $1.03$, $1.26$ and $2.24$ at $u=10$, $1$ and $0.1$ for $p=1/20$ (the full expression and the $t$ dependence are given below): the gain grows only as $\ln(1/u)$ as the port darkens, and on the other side of the dark angle the same amount to first order in $1/u$ is lost. A background rate $b_m$ in the port, from dark counts or imperfect interference, adds to $P(m)$ and to $P(m|t)$ alike and so sets a floor $u^2\ge b_m/[(1-\sqrt{t})^2|\braket{m}{a}|^2P(a)]$. This is the floor met for the Fisher information in \cref{eq:visibility}, and for a weak absorber it is demanding, since $(1-\sqrt{t})^2\simeq(1-t)^2/4$. Wildfeuer gives the corresponding ceilings on the exponent of the Zeno chain: $\ln[1/(2\sqrt\delta)]$ for a small crosstalk $\delta$ between the output ports, $\ln(1/\delta)$ for dark counts in the conclusive port, and $\ln[1/(1-\varepsilon_{\mathrm d})]$ for a detection efficiency $\varepsilon_{\mathrm d}$ \cite{Wildfeuer2026}.

Two of the statements above are proved here. First, for real KD terms without null ports the next term of \cref{eq:etaweak} is
\begin{equation}
\begin{split}
\eta-\eta_{\mathrm{dir}}=&\frac{(1-t)^2}{12}\Big[1-\frac1{P(a)}\sum_mP(m)\wv{m}^3\Big]\\
&+O\left((1-t)^3\right),
\label{eq:etacubic}
\end{split}
\end{equation}
and since the two sum rules give $\sum_mP(m)\wv{m}=\sum_mP(m)\wv{m}^2=P(a)$ in this case, the bracket equals $-\sum_mP(m)\wv{m}(1-\wv{m})^2/P(a)$, which is \cref{eq:etasecond}. A negative weak value is therefore necessary for a gain at this order, and sufficient when no weak value exceeds one, since then $\sum_mP(m)\wv{m}^3<\sum_mP(m)\wv{m}^2$. With a weak value above one the sign can go either way: $\wv{m}=(3,-1,1/2)$ with $P(m)=(1/50,1/18,208/225)$, which satisfies both sum rules and is realised by real amplitudes in a three-path interferometer, gives $\eta/\eta_{\mathrm{dir}}=0.994$ at $t=0.9$ and $0.93$ at $t=0.5$. For the two three-path states of \cref{fig:disc}, computed without expansion, the gain is $1.6\%$ ($\ket{N_f}$) and $1.8\%$ ($\ket{V_0}$) at $t=0.9$, and $11\%$ for both at $t=0.5$; for the two-path interferometer with $p=1/20$ it is $1.9\%$ and $12\%$. Second, an output with $0\le\wv{m}\le1$ contributes at most $P(m)\wv{m}\Lambda(t)$ to $D$ at all orders whenever $t\ge e^{-2}$: with $\epsilon=1-\sqrt{t}$, the function $\Phi(w)=\Lambda((1-\epsilon w)^2)$ has $\Phi''(w)=4\epsilon^2[1+\ln(1-\epsilon w)]\ge4\epsilon^2(1+\tfrac12\ln t)$ on $[0,1]$, so it is convex there and, vanishing at $w=0$, lies below the chord $w\,\Lambda(t)$. If every bright output has such a weak value, summing over $m$ and using $\sum_mP(m)\wv{m}=P(a)$ bounds the bright-output evidence by $\eta_{\mathrm{dir}}$ at all orders; without null ports the sum rules then leave only the values $0$ and $1$, for which $\eta=\eta_{\mathrm{dir}}$ at every $t$, so, for real KD terms, any gain over direct transmission comes from outputs with anomalous weak values.

For the two-path interferometer of \cref{fig:fisher}(b), with the readout on the side of the dark angle where the port brightens ($\wv{-}<0$), the dark port contributes
\begin{equation}
\begin{split}
\eta_-&=\frac{1-\sqrt{t}}{1+\sqrt{t}}\cos^2\theta\left[2(1+u)^2\ln\left(1+\frac{1}{u}\right)-1-2u\right],\\
u&=\frac{|\sin(\theta-\theta_0)|}{(1-\sqrt{t})\sqrt{p}\cos{\theta}},
\label{eq:etadark}
\end{split}
\end{equation}
where $(1-t)|\wv{-}|=(1+\sqrt{t})/u$. The bracket tends to $2$ as $u\to\infty$, which recovers the share $\cos^2\theta$ of \cref{eq:etaweak}, and grows as $2\ln(1/u)$ as the port darkens; with the bright port included, $\eta/\eta_{\mathrm{dir}}$ is $1.03$ at $u=10$, $1.26$ at $u=1$ and $2.24$ at $u=0.1$ for $p=1/20$ and $t=0.99$ (the limit $t\to1$ changes these by less than $1\%$). The leading term in $1-t$ is \cref{eq:etadarklead}, obtained from \cref{eq:etadark} with $\cos^2\theta\to\cos^2\theta_0$ and $\eta_{\mathrm{dir}}\to(1-t)/2$, plus the bright port's share $\sin^2\theta_0$; it gives $1.031$, $1.259$ and $2.236$. On the other side of the dark angle $\wv{-}=1/[(1-\sqrt{t})u]$ is positive, anomalous for $u<1/(1-\sqrt{t})$, and the port darkens for $u>1/2$; there $\eta$ falls below $\eta_{\mathrm{dir}}$ by the same amount to first order in $1/u$ and in $1-t$. The extinction ratio of the port is $P(m_-)/[(1-\sqrt{t})^2|\braket{m_-}{a}|^2P(a)]=u^2$.

In the registered case the first absorbed photon is itself a proof, and the question becomes how often one can decide before absorbing anything: a single pass offers only the null-port probability $(1-\sqrt{t})^2|\braket{m}{a}|^2P(a)$, whereas Mitchison and Massar showed that a Zeno protocol with many passes brings it arbitrarily close to one for any $t<1$ \cite{Mitchison2001}. Presence against absence of a weak absorber is also the case that Massar, Mitchison and Pironio placed outside their first absorption bound, with many coherent passes of a single photon as the route around it \cite{Massar2001}; their later bound covers transparencies of arbitrary separation but vanishes here, since one of the two absorption amplitudes is zero and their Eq.~(1) has the product of the two in its numerator \cite{Mitchison2002}.

\end{document}